\newcommand{\longsquiggly}{\xymatrix@C-=0.5cm{{}\ar@{~>}[r]&{}}}
\newcommand{\sectionword}{Section}
\newcommand{\appendixword}{Appendix}
\newcommand{\figureword}{Fig.}
\newcommand{\lemmaword}{Lem.}
\newcommand{\eqdef}{\overset{\mathrm{def}}{=}}
\newcommand{\ruleno}[1]{\mbox{[\textsc{#1}]}}
\newcommand{\bigslant}[2]{{\raisebox{.2em}{$#1$}\left/\raisebox{-.2em}{$#2$}\right.}}
\newcommand{\argmax}[1]{\underset{\makebox[0pt]{\mbox{\normalfont\tiny {$#1$}}}}{argmax} \;\;}
\newcommand{\maxd}{\max^{\bullet}}
\newcommand{\ttt}[1]{\mbox{\texttt{#1}}}
\let\emptyset\varnothing
\newcommand{\inbr}[1]{\left<{#1}\right>}
\newcommand{\fv}[1]{\mathcal{FV}\,({#1})}
\newcommand{\grterms}{\mathcal{T}_{\emptyset}}
\newcommand{\Dom}{\mathcal{D}om}
\newcommand{\VRan}{\mathcal{VR}an}
\newcommand{\substitute}[3]{#1\,[\bigslant{#2}{#3}]}
\newcommand{\taskst}[2]{\langle #1 ,\, #2 \rangle}
\newcommand{\mkenv}[2]{(#1 ,\, #2)}
\newcommand{\unigoal}[2]{#1 \equiv #2}
\newcommand{\conjgoal}[2]{#1 \land #2}
\newcommand{\disjgoal}[2]{#1 \lor #2}
\newcommand{\freshgoal}[2]{\mbox{\lstinline|fresh|} \, #1\;.\; #2}
\newcommand{\invokegoal}[3]{#1\,(#2, \, \dots, \, #3)}
\newcommand{\ninit}{n_{init}}
\newcommand{\schemetrans}[6]{\withenv{#2,\,#3,\,#4,\,#5}{#1} \longsquiggly #6}
\newcommand{\schemewithvset}[2]{{#1}^{#2}}
\newcommand{\schemenode}[1]{
\tikz[anchor=base,baseline,sibling distance=3cm,edge from parent/.style={draw,-latex}]  \node{#1};
 }
\newcommand{\schemesarrow}[3]{
\begin{tikzpicture}[level distance=30pt,sibling distance=3cm,edge from parent/.style={draw,-latex}]
  \node{#1}child {node{#3}edge from parent node[right]{\tiny{#2}}};
\end{tikzpicture}}
\newcommand{\schemedarrow}[3]{
\begin{tikzpicture}[level distance=30pt,sibling distance=3cm,edge from parent/.style={draw,-latex}]
  \node{#1}child {node{#3}edge from parent node[right]{\tiny{#2}}};
\end{tikzpicture}}
\newcommand{\schemefork}[2]{
  \begin{tikzpicture}[level distance=20pt,edge from parent/.style={draw,-latex}]
   \coordinate   
      child {node{#1}}
      child {node{#2}} ;
\end{tikzpicture}}
\newcommand{\upd}[2]{\mbox{\textbf{upd}}\,(#1,\, #2)}
\newcommand{\constr}[2]{\mbox{\textbf{constr}}\,(#1,\, #2)}
\newcommand{\withenv}[2]{\left< #1 \right> \;\vdash\; #2}
\newcommand{\onepremrule}[2]{\dfrac{#1}{#2}}
\newcommand{\twopremrule}[3]{\dfrac{#1,\; #2}{#3}}
\newcommand{\threepremrule}[4]{\dfrac{#1,\;  #2,\;  #3}{#4}}
\newcommand{\lazystream}[1]{\texttt{Lazy [{#1}]}}
\newcommand{\consstream}[2]{\texttt{Cons #1 [{#2}]}}
\newcommand{\sembr}[1]{\llbracket #1 \rrbracket}
\newcommand{\tra}[1]{\mathcal{T}r^{ans}(#1)}
\newcommand{\trs}[1]{\mathcal{T}r^{st}(#1)}
\newcommand{\mK}{\textsc{miniKanren}\xspace}
\newcommand{\costdisj}[2]{cost_{\oplus}(#1 \oplus #2)}
\newcommand{\costconj}[2]{cost_{\otimes}(#1 \otimes #2)}
\newcommand{\lookuptime}[1]{\texttt{lookup}\,(#1)}
\newcommand{\addtime}[1]{\texttt{add}\,(#1)}
\renewcommand{\O}{\mathcal{O}}
\renewcommand{\labelenumii}{\arabic{enumi}.\arabic{enumii}}
\renewcommand{\labelenumiii}{\arabic{enumi}.\arabic{enumii}.\arabic{enumiii}}
\renewcommand{\labelenumiv}{\arabic{enumi}.\arabic{enumii}.\arabic{enumiii}.\arabic{enumiv}}
\let\lemma\relax
\spnewtheorem{lemma}{Lemma}[section]{\bfseries}{\itshape}
\newcommand{\repeatlemma}[1]{%
  \begingroup
  \renewcommand{\thelemma}{\ref{#1}}%
  \expandafter\expandafter\expandafter\lemma
  \csname replemma@#1\endcsname
  \endlemma
  \endgroup
}
\newcommand{\repeattheorem}[1]{%
  \begingroup
  \renewcommand{\thetheorem}{\ref{#1}}%
  \expandafter\expandafter\expandafter\theorem
  \csname reptheorem@#1\endcsname
  \endtheorem
  \endgroup
}
\xdef\csname replemma@#1\endcsname{%
    \unexpanded\expandafter{\BODY}%
  }%
\BODY\label{#1}\endlemma
\xdef\csname reptheorem@#1\endcsname{%
    \unexpanded\expandafter{\BODY}%
  }%
\BODY\label{#1}\endtheorem
\lstdefinelanguage{minikanren}{
keywords={fresh},
sensitive=true,
commentstyle=\small\itshape\ttfamily,
keywordstyle=\textbf,
identifierstyle=\ttfamily,
basewidth={0.5em,0.5em},
columns=fixed,
fontadjust=true,
literate={fun}{{$\lambda\;\;$}}1 {->}{{$\to$}}3 {===}{{$\,\equiv\,$}}1 {=/=}{{$\not\equiv$}}1 {|>}{{$\triangleright$}}3 {/\\}{{$\wedge$}}2 {\\/}{{$\vee$}}2,
morecomment=[s]{(*}{*)}
}
\newcommand*{\SavedLstInline}{}
\LetLtxMacro\SavedLstInline\lstinline
\DeclareRobustCommand*{\lstinline}{%
  \ifmmode
    \let\SavedBGroup\bgroup
    \def\bgroup{%
      \let\bgroup\SavedBGroup
      \hbox\bgroup
    }%
  \fi
  \SavedLstInline
}
\begin{document}


\makeatletter
\let\origsection\section
\renewcommand\section{\@ifstar{\starsection}{\nostarsection}}

\newcommand\nostarsection[1]
{\sectionprelude\origsection{#1}\sectionpostlude}

\newcommand\starsection[1]
{\sectionprelude\origsection*{#1}\sectionpostlude}

\newcommand\sectionprelude{%
  \vspace{-2mm}
}

\newcommand\sectionpostlude{%
  \vspace{-2mm}
}
\makeatother

\setlength{\abovecaptionskip}{-5pt plus 3pt minus 2pt}
\setlength{\belowcaptionskip}{-20pt plus 3pt minus 2pt}

\abovedisplayskip-1mm
\belowdisplayskip0mm
\abovedisplayshortskip-3mm
\belowdisplayshortskip0mm

\setlength{\topsep}{0pt}
\setlength{\partopsep}{0pt plus 0pt minus 0pt}
\setlength{\parskip}{0pt}

\title{Scheduling Complexity of Interleaving Search}

\author{Dmitry Rozplokhas\orcidID{0000-0001-7882-4497} \and \\
Dmitry Boulytchev\orcidID{0000-0001-8363-7143}}
\authorrunning{D. Rozplokhas and D. Boulytchev}
%
\institute{St Petersburg University and JetBrains Research, Russia \\ \email{rozplokhas@gmail.com}, \email{dboulytchev@math.spbu.ru}}

\maketitle

\begin{abstract}
  \mK is a lightweight embedded language for logic and relational programming. Many of its useful features come from
  a distinctive search strategy, called \emph{interleaving search}. However, with interleaving search conventional  
  ways of reasoning about the complexity and performance of logical programs become irrelevant. We identify an important
  key component~--- \emph{scheduling}~--- which makes the reasoning for \mK so different, and present a semi-automatic
  technique to estimate the scheduling impact via symbolic execution for a reasonably wide class of programs.
\keywords{miniKanren, interleaving search, time complexity, symbolic execution}
\end{abstract}

\section{Introduction}
\label{sec:intro}

\begin{figure}[t]
\begin{tabular}{p{6cm}p{6cm}}
\begin{lstlisting}[basicstyle=\small]
   append$^o$ = fun a b ab .
     ((a === Nil) /\ (ab === b)) \/
     (fresh (h t tb)
        (a === Cons(h, t)) /\
        (append$^o$ t b tb) /\
        (ab === Cons(h, tb)))
\end{lstlisting} & \multirow{2}{*}[-3mm]{\includegraphics[width=6cm,height=5cm]{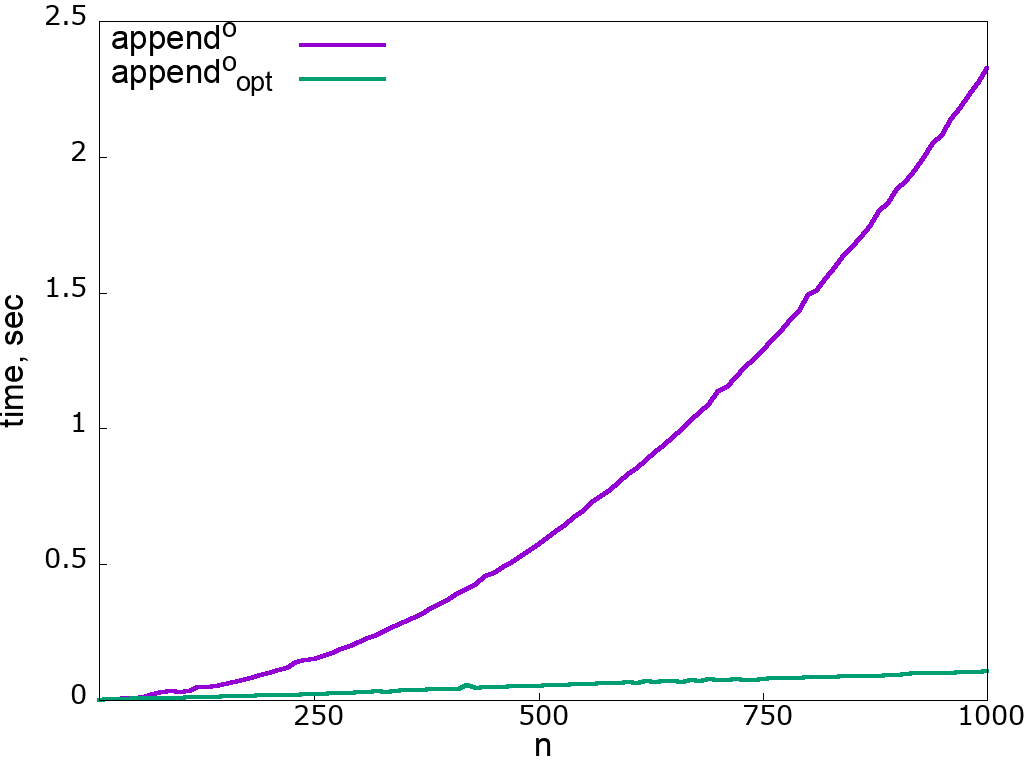}} \\[-9mm]
\begin{lstlisting}[basicstyle=\small]
   append$^o_{opt}$ = fun a b ab .
     ((a === Nil) /\ (ab === b)) \/
     (fresh (h t tb)
        (a === Cons(h, t)) /\
        (ab === Cons(h, tb) /\
        (append$^o_{opt}$ t b tb)))
\end{lstlisting} &
\end{tabular}
\caption{Two implementations of list concatenation and their performance for $a = [1,\dots,n]$, $b = [1,\dots,100]$, and $ab$ left free.}
\label{fig:length_implementations}
\end{figure}

A family of embedded languages for logic and, more specifically, relational programming \mK~\cite{TRS} has demonstrated an interesting potential in various fields of 
program synthesis and declarative programming~\cite{SevenProblems,Quines,Matching}. A distinctive feature of \mK is \emph{interleaving search}~\cite{Transformers} which,
in particular, delivers such an important feature as completeness.

However, being a different search strategy than conventional BFS/DFS/iterative deepening, etc., interleaving search makes the conventional ways of reasoning about the complexity
of logical programs not applicable. Moreover, some intrinsic properties of interleaving search can manifest themselves in a number of astounding and, at the first glance, unexplainable
performance effects. 

As an example, let's consider two implementations of list concatenation relation (\figureword~\ref{fig:length_implementations}, left side); we respect here a conventional tradition
for \mK programming to superscript all relational names with ``o''. The only difference between the two is
the position of the recursive call.
The evaluation of these implementations on the same problem (\figureword~\ref{fig:length_implementations}, right side) shows that the first implementation works significantly
slower,
although it performs exactly the same number of unifications. As a matter of fact, these two implementations even have different \emph{asymptotic} complexity under the assumption
that occurs check is disabled.\footnote{The role of occurs check is discussed in
\sectionword~\ref{sec:evaluation}.} Although the better performance of the \lstinline|append$_{opt}^o$| relation is expected even under conventional strategies due to tail recursion, the asymptotic difference is striking.

A careful analysis discovers that the difference is caused not by unifications, but by the process of \emph{scheduling} goals during the search. In \mK a
lazy structure is maintained to decompose the goals into unifications, perform these unifications in a certain order, and thread the results appropriately. For the \lstinline|append$_{opt}^o$| relation the size of this structure is constant, while for the \lstinline|append$^o$|
this structure becomes linear in size, reducing the performance.


This paper presents a formal framework for scheduling cost complexity analysis for interleaving search in \mK.
We use the reference operational semantics, reflecting the behaviour of actual implementations~\cite{CertifiedSemantics}, and prove the soundness of our approach w.r.t. this semantics.
The roadmap of the approach is as follows: we identify two complexity measures (one of which captures \emph{scheduling complexity}) and give exact and approximate recursive formulae to calculate them (\sectionword~\ref{sec:scheduling}); then we present a procedure to automatically extract inequalities for the measures for a given goal using symbolic execution (\sectionword~\ref{sec:symbolic}). 
These inequalities have to be reformulated and solved manually in terms of a certain \emph{metatheory}, which, on success, provides asymptotic bounds for the scheduling complexity of a goal evaluation.
Our approach puts a number of restrictions on the goal being analyzed as well as on the relational program as a whole.
We explicitly state these restrictions in \sectionword~\ref{sec:background} and discuss their impact in \sectionword~\ref{sec:discussion}.
The proofs of all lemmas and theorems can be found in \appendixword~\ref{sec:proofs_appendix}.

\section{Background: Syntax and Semantics of \mK}
\label{sec:background}

In this section, we recollect some known formal descriptions for \mK language that will be used as a basis for our development.
The descriptions here are taken from~\cite{CertifiedSemantics} (with a few non-essential adjustments for presentation purposes) to make
the paper self-contained, more details and explanations can be found there.

The syntax of core \mK is shown in Fig.~\ref{fig:syntax}. 
All data is presented using terms $\mathcal{T}_X$ built from a fixed set of constructors $\mathcal{C}$ with known arities and variables
from a given set $X$.
We parameterize the terms with an alphabet of variables since in the semantic description we will need \emph{two} kinds of variables:
\emph{syntactic} variables $\mathcal{X}$, used for bindings in the definitions, and \emph{logic} variables $\mathcal{A}$, which are
introduced and unified during the evaluation. We assume the set $\mathcal{A}$ is ordered and use the notation $\alpha_i$ 
to specify a position of a logical variable w.r.t. this order.

There are five types of goals: unification of two terms, conjunction and disjunction of goals,
fresh logic variable introduction, and invocation of some relational definition. For the sake of brevity, in code snippets, we abbreviate
immediately nested ``\lstinline|fresh|'' constructs into the one, writing ``\lstinline|fresh x y $\dots$ . $g$|'' instead of
``\lstinline|fresh x . fresh y . $\dots$ $g$|''. The \emph{specification} $\mathcal{S}$ consists of a set of relational definitions and a top-level goal.
A top-level goal represents a search procedure that returns a stream of substitutions for the free variables of the goal.

\begin{figure}[t]
\centering
\[
\begin{array}{ccll}
  \mathcal{C} & = & \{C_i^{k_i}\} & \mbox{constructors} \\
  \mathcal{T}_X & = & X \cup \{C_i^{k_i} (t_1, \dots, t_{k_i}) \mid t_j\in\mathcal{T}_X\} & \mbox{terms over the set of variables $X$} \\
  \mathcal{D} & = & \mathcal{T}_\emptyset & \mbox{ground terms}\\
  \mathcal{X} & = & \{ \ttt{x}, \ttt{y}, \ttt{z}, \dots \} & \mbox{syntactic variables} \\
  \mathcal{A} & = & \{ x, y, z \dots \} & \mbox{logic variables} \\
  \mathcal{R} & = & \{ R_i^{k_i}\} &\mbox{relational symbols with arities} \\[2mm]
  \mathcal{G} & = & \mathcal{T_X}\equiv\mathcal{T_X}   &  \mbox{equality} \\
              &   & \mathcal{G}\wedge\mathcal{G}     & \mbox{conjunction} \\
              &   & \mathcal{G}\vee\mathcal{G}       &\mbox{disjunction} \\
              &   & \mbox{\lstinline|fresh|}\;\mathcal{X}\;.\;\mathcal{G} & \mbox{fresh variable introduction} \\
              &   & R_i^{k_i} (t_1,\dots,t_{k_i}),\;t_j\in\mathcal{T_X} & \mbox{relational symbol invocation} \\[2mm]
  \mathcal{S} & = & \{R_i^{k_i} = \lambda\;\ttt{x}_1^i\dots \ttt{x}_{k_i}^i\,.\, g_i;\}\; g, \; g_i, g\in\mathcal{G}\phantom{XXX} & \mbox{specification}
\end{array}
\]
\caption{The syntax of \mK}
\label{fig:syntax}
\end{figure}

During the evaluation of \mK program an environment, consisting of a substitution for logic variables and a counter of allocated logic
variables, is threaded through the computation and updated in every unification and fresh variable introduction.
The substitution in the environment at a given point and given branch of evaluation contains all the information about relations between
the logical variables at this point.
Different branches are combined via \emph{interleaving search} procedure~\cite{Transformers}.
The answers for a given goal are extracted from the final environments.

\begin{figure}[t]
\centering
\[
\begin{array}{ccllcccll}
  \Sigma & = & \mathcal{A} \to \mathcal{T}_\mathcal{A} & \mbox{substitutions} &\qquad\qquad&        S & = & \taskst{\mathcal{G}}{E} & \mbox{task} \\
       E & = & \Sigma \times \mathbb{N}              & \mbox{environments}  &\qquad\qquad&          &   & S \oplus S              & \mbox{sum} \\
         &   &                                       &                      &\qquad\qquad&          &   & S \otimes \mathcal{G}   & \mbox{product} \\
       L & = & \circ \; \mid \; E      & \mbox{labels}        &\qquad\qquad&  \hat{S} & = & \diamond \; \mid \; S   & \mbox{states} \\
\end{array}
\]
\caption{States and labels in the LTS for \mK}
\label{fig:operanional_semantics_states_labels}
\end{figure}

This search procedure is formally described by operational semantics in the form of a labeled transition system.
This semantics corresponds to the canonical implementation of interleaving search. 

The form of states and labels in the transition system is defined in \figureword~\ref{fig:operanional_semantics_states_labels}.
Non-terminal states $S$ have a tree-like structure with intermediate nodes corresponding to partially evaluated conjunctions
(``$\otimes$'') or disjunctions (``$\oplus$'').
A leaf in the form $\taskst{g}{e}$ determines a task to evaluate a goal $g$ in an environment $e$. For a conjunction node, its right child
is always a goal since it cannot be evaluated unless some result is provided by the left conjunct.
We also need a terminal state $\diamond$ to represent the end of the evaluation.
The label ``$\circ$'' is used to mark those steps which do not provide an answer; otherwise, a transition is labeled by an updated
environment.

\begin{figure*}[t]
  \renewcommand{\arraystretch}{1.6}
  \[
  \begin{array}{crcr}
    \multicolumn{3}{c}{\taskst{t_1 \equiv t_2}{(\sigma, n)} \xrightarrow{\circ} \Diamond , \, \, \nexists\; mgu\,(t_1 \sigma, t_2 \sigma)} &\ruleno{UnifyFail} \\
    \multicolumn{3}{c}{\taskst{t_1 \equiv t_2}{(\sigma, n)} \xrightarrow{(mgu\,(t_1 \sigma, t_2 \sigma) \circ \sigma),\, n)} \Diamond} & \ruleno{UnifySuccess} \\
    \multicolumn{3}{c}{\taskst{\mbox{\lstinline|fresh|}\, \ttt{x}\, .\, g}{(\sigma, n)} \xrightarrow{\circ} \taskst{g\,[\bigslant{\alpha_{n + 1}}{\ttt{x}}]}{( \sigma, n + 1)}} & \ruleno{Fresh} \\
    \multicolumn{3}{c}{\dfrac{R_i^{k_i}=\lambda\,\ttt{x}_1\dots \ttt{x}_{k_i}\,.\,g}{\taskst{R_i^{k_i}\,(t_1,\dots,t_{k_i})}{e} \xrightarrow{\circ} \taskst{g\,[\bigslant{t_1}{\ttt{x}_1}\dots\bigslant{t_{k_i}}{\ttt{x}_{k_i}}]}{e}}} & \ruleno{Invoke}\\[3mm]
    \taskst{g_1 \lor g_2}{e} \xrightarrow{\circ} \taskst{g_1}{e} \oplus \taskst{g_2}{e} & \ruleno{Disj} &
    \taskst{g_1 \land g_2}{e} \xrightarrow{\circ} \taskst{g_1}{e} \otimes g_2 & \ruleno{Conj} \\    
    \dfrac{s_1 \xrightarrow{l} \Diamond}{(s_1 \oplus s_2) \xrightarrow{l} s_2} & \ruleno{DisjStop} &
    \dfrac{s_1 \xrightarrow{l} s'_1}{(s_1 \oplus s_2) \xrightarrow{l} (s_2 \oplus s'_1)} &\ruleno{DisjStep} \\
    \dfrac{s \xrightarrow{\circ} \Diamond}{(s \otimes g) \xrightarrow{\circ} \Diamond} &\ruleno{ConjStop} &
    \dfrac{s \xrightarrow{e} \Diamond}{(s \otimes g) \xrightarrow{\circ} \taskst{g}{e}}  & \ruleno{ConjStopAns}\\
    \dfrac{s \xrightarrow{\circ} s'}{(s \otimes g) \xrightarrow{\circ} (s' \otimes g)} &\ruleno{ConjStep} &
    \dfrac{s \xrightarrow{e} s'}{(s \otimes g) \xrightarrow{\circ} (\taskst{g}{e} \oplus (s' \otimes g))} & \ruleno{ConjStepAns} 
  \end{array}
  \]
  \caption{Operational semantics of interleaving search}
  \label{fig:operanional_semantics_rules}
\end{figure*}

The transition rules are shown in Fig.~\ref{fig:operanional_semantics_rules}.
The first six rules define the evaluation of leaf states.
For the disjunction and conjunction, the corresponding node states are constructed.
For other types of goals the environment and the evaluated goal are updated in accordance with the task given by the goal: for an equality the most general unifier of the terms is incorporated into the substitution (or execution halts if the terms are non-unifiable); for a fresh construction a new variable is introduced and the counter of allocated variables is incremented; for a relational call the body of the relation is taken as the next goal.
The rest of the rules define composition of evaluation of substates for partial disjunctions and conjunctions.
For a partial disjunction, the first constituent is evaluated for one step, then the constituents are swapped (which constitutes the \emph{interleaving}), and the label is propagated.
When the evaluation of the first constituent of partial disjunction halts, the evaluation proceeds with the second constituent.
For a partial conjunction, the first constituent is evaluated until the answer is obtained, then the evaluation of the second constituent with this answer as the environment is scheduled for evaluation together with the remaining partial conjunction (via partial disjunction node).
When the evaluation of the first constituent of partial conjunction halts, the evaluation of the conjunction halts, too.

The introduced transition system is completely deterministic,
therefore a derivation sequence for a state $s$ determines a certain \emph{trace}~--- a sequence of states and labeled transitions between
them. It may be either finite (ending with the terminal state $\Diamond$) or infinite. We will denote by $\trs{s}$ the sequence of states in
the trace for initial state $s$ and by $\tra{s}$ the sequence of answers in the trace for initial state $s$. The sequence $\tra{s}$ corresponds
to the stream of answers in the reference \mK implementations.

In the following we rely on the following property of leaf states:

\begin{definition}
  A leaf state $\taskst{g}{\mkenv{\sigma}{n}}$ is well-formed iff $\fv{g}\cup\Dom\,(\sigma)\cup\VRan\,(\sigma)\subseteq\{\alpha_1,\dots,\alpha_n\}$, where
  $\fv{g}$ denotes the set of free variables in a goal $g$, $\Dom\,(\sigma)$ and $\VRan\,(\sigma)$~--- the domain of a substitution $\sigma$ and
  a set of all free variables in its image respectively.
\end{definition}

Informally, in a well-formed leaf state all free variables in goals and substitution respect the counter of free logical variables.
This definition is in fact an instance of a more general definition of well-formedness for all states, introduced in~\cite{CertifiedSemantics}, where it is
proven that the initial state is well-formed and any transition from a well-formed state results in a well-formed one.

Besides operational semantics, we will make use of a denotational one analogous to the least Herbrand model. For a relation $R^k$, its denotational semantics $\sembr{R^k}$ is
treated as a $k$-ary relation on the set of all ground terms, where each ``dimension'' corresponds to a certain argument of $R^k$. For example,
$\sembr{\lstinline|append$^o$|}$ is a set of all triplets of ground lists, in which the third component is a
concatenation of the first two. The concrete description of the denotational semantics is given in~\cite{CertifiedSemantics} as well as the proof of
the soundness and completeness of the operational semantics w.r.t. to the denotational one.

\begin{figure}[t]
\centering
\[
\begin{array}{ccll}
B_{nf} & = &  \unigoal{\mathcal{T}_\mathcal{X}}{\mathcal{T}_\mathcal{X}} \; \mid \;
                     \invokegoal{R^k}{\mathcal{T}_\mathcal{X}}{\mathcal{T}_\mathcal{X}} \\
C_{nf} & = & B_{nf} \; \mid \; \conjgoal{C_{nf}}{B_{nf}} \\
F_{nf} & = & C_{nf} \; \mid \; \freshgoal{X}{F_{nf} } \\
D_{nf} & = & F_{nf} \; \mid \; \disjgoal{D_{nf}}{F_{nf}}
\end{array}
\]
\caption{Disjunctive Normal Form for goals}
\label{fig:dnf}
\end{figure}

Finally, we explicitly enumerate all the restrictions required by our method to work:

\begin{itemize}
\item All relations have to be in DNF (set $D_{nf}$ in \figureword\ref{fig:dnf}). 
\item We only consider goals which converge with a finite number of answers.
\item All answers have to be ground (\emph{groundness} condition) for all relation invocations encountered
  during the evaluation.
\item All answers have to be unique (\emph{answer uniqueness} condition) for all relation invocations encountered
  during the evaluation.
\end{itemize}

\section{Scheduling Complexity}
\label{sec:scheduling}


We may notice that the operational semantics described in the previous section can be used to calculate the exact number of elementary scheduling steps.
Our first idea is to take the number of states $d\,(s)$ in the finite trace for a given state $s$:

\[ d\,(s) \; \eqdef \; | \trs{s} |  \]

However, it turns out that this value alone does not provide an accurate scheduling complexity estimation. The reason is that some
elementary steps in the semantics are not elementary in existing implementations. Namely, a careful analysis discovers that
each semantic step involves navigation to the leftmost leaf of the state which in implementations corresponds to multiple elementary actions,
whose number is proportional to the height of the leftmost branch of the state in question. Here we provide an \emph{ad-hoc} definition for this value, $t\,(s)$,
which we call the \emph{scheduling factor}:

\[
t\,(s) \eqdef \sum\limits_{s_i \in \trs{s}} lh\,(s_i) 
\]

where $lh\,(s_i)$ is the height of the leftmost branch of the state. 

In the rest of the section, we derive recurrent equations which would relate the scheduling complexity for states to the scheduling complexity for their
(immediate) substates. It turns out that to come up with such equations both $t$ and $d$ values have to be estimated simultaneously.



The next lemma provides the equations for $\oplus$-states:

\begin{replemma}{lem:sum_measure_equations}
For any two states $s_1$ and $s_2$

\[
\begin{array}{rcl}
  d\,(s_1 \oplus s_2) &=& d\,(s_1) + d\,(s_2) \\
    t\,(s_1 \oplus s_2) &=& t\,(s_1) + t\,(s_2) + \costdisj{s_1}{s_2}
\end{array}
\]

where $\costdisj{s_1}{s_2} = \min\,\{2\cdot d\,(s_1) - 1, 2\cdot d\,(s_2)\}$
\end{replemma}

Informally, for a state in the form $s_1 \oplus s_2$ the substates are evaluated separately, one step at a time for
each substate, so the total number of semantic steps is the sum of those for the substates. However, for the scheduling factor, 
there is an extra summand $\costdisj{s_1}{s_2}$ since the ``leftmost heights'' of the states in the trace are one node greater than those for the
original substates due to the introduction of one additional $\oplus$-node on the top. This additional node persists in the trace until the evaluation
of one of the substates comes to an end, so the scheduling factor is increased by the number of steps until that.

The next lemma provides the equations for $\otimes$-states:\footnote{We assume $\Diamond\otimes g = \Diamond$}

\begin{replemma}{lem:times_measure_equations}
  For any state $s$ and any goal $g$
  
\[
\begin{array}{rclr}
d\,(s \otimes g)  &=&  d\,(s) + \smashoperator[lr]{\sum\limits_{a_i \in \tra{s}}} d\,(\taskst{g}{a_i})& \qquad(\star) \\

 t\,(s \otimes g)  &=&  t\,(s) + \costconj{s}{g} + \smashoperator[lr]{\sum\limits_{a_i \in \tra{s}}} (t\,(\taskst{g}{a_i}) + \costdisj{\taskst{g}{a_i}}{(s'_i \otimes g)})&\qquad(\dagger)
\end{array}
\]

where 
\[
\begin{array}{rcl}
\costconj{s}{g} & = & d\,(s) \\
s'_i & = & \mbox{the first state in the trace for $s$ after} \\
 & & \mbox{a transition delivering the answer $a_i$} \\
\end{array}
\]
\end{replemma}

For the states of the form $s \otimes g$ the reasoning is the same, but the resulting equations are more complicated.
In an $\otimes$-state the left substate is evaluated until an answer is found, which is then taken as
\emph{an environment} for the evaluation of the right subgoal.
Thus, in the equations for $\otimes$-states the evaluation times of the second goal \emph{for all
the answers} generated for the first substate are summed up. The evaluation of the right subgoal
in different environments is added to the evaluation of the left substate via creation of
an $\oplus$-state, so for the scheduling factor there is
an additional summand $\costdisj{\taskst{g}{a_i}}{s'_i}$ for each answer with $s'_i$ being the state
after discovering the answer.
There is also an extra summand $\costconj{s}{g}$ for the scheduling factor because of the
$\otimes$-node that increases the height in the trace, analogous to the one caused by
$\oplus$-nodes.
Note, a $\otimes$-node is always placed immediately over the left substate so this
addition is exactly the number of steps for the left substate.

Unfolding costs definitions in $(\dagger)$ gives us a cumbersome formula that 
includes some intermediate states $s'_i$ encountered during the evaluation. However, as ultimately
we are interested in asymptotic estimations, we can approximate these costs up to a multiplicative constant. We can notice that the value $d\,(s'_i \otimes g)$ occurring in the second argument of $cost_{\oplus}$ includes values $d\,(\taskst{g}{a_j})$ (like in the first argument) for all answers $a_j$ after this intermediate state. So in the sum of all $cost_{\oplus}$ values $d\,(\taskst{g}{a_i})$ may be excluded for at most one answer, and in fact, if we take the maximal one of these values we will get a rather precise approximation. Specifically, we can state the following approximation\footnote{We assume the following definition for
$f\,(x) = g\,(x) + \Theta\,(h\,(x))$: \[\exists C_1, C_2 \in \mathcal{R^{+}}, \, \forall x : g\,(x) + C_1 \cdot h\,(x) \le f\,(x) \le g\,(x) + C_2 \cdot h\,(x) \]}
for $t\,(s \otimes g)$.

\begin{replemma}{lem:otimes_t_approximation}
\[
 t\,(s \otimes g)  =  t\,(s) + \left({\sum\limits_{a_i \in \tra{s}}} t\,(\taskst{g}{a_i})\right) +
 \Theta\,(d\,(s) + \smashoperator[lr]{\sum\limits_{a_i \in \tra{s}}} d\,(\taskst{g}{a_i}) - \smashoperator{\maxd\limits_{a_i \in \tra{s}}} d\,(\taskst{g}{a_i}))	
\]
\end{replemma}

Hereafter we use the following notation: $\displaystyle{\maxd}\, S = \max\,(S\cup\{0\})$.
We can see that the part under $\Theta$ is very similar to the $(\star)$ except that here we exclude $d$ value for one of the answers from the sum.
This difference is essential and, as we will see later, it is in fact responsible for the difference in complexities for our motivating example.

\section{Complexity Analysis via Symbolic Execution}
\label{sec:symbolic}

Our approach to complexity analysis is based on a semi-automatic procedure involving symbolic execution.
In the previous section, we presented formulae to compositionally estimate the complexity factors for
\emph{non-leaf states} of operational semantics under the assumption that corresponding estimations for
\emph{leaf states} are given. In order to obtain corresponding estimations for relations as a whole, we
would need to take into account the effects of relational invocations, including the recursive ones.

Another observation is that as a rule we are interested in complexity estimations in terms of some \emph{metatheory}. For
example, dealing with relations on lists we would be interested in estimations in terms of list lengths,
with trees~--- in terms of depth or number of nodes, with numbers~--- in terms of their values, etc. It is
unlikely that a generic term-based framework would provide such specific information automatically. Thus,
a viable approach would be to extract some inequalities involving the complexity factors of certain relational
calls automatically and then let a human being solve these inequalities in terms of a relevant metatheory.

For the sake of clarity we will provide a demonstration of complexity analysis for a specific example~---
\lstinline|append$^o$| relation from the introduction~--- throughout the section.

The extraction procedure utilizes a symbolic execution technique and is completely automatic.
It turns out that the semantics we have is already abstract enough to be used for symbolic
execution with minor adjustments.
In this symbolic procedure, we mark some of the logic variables as ``grounded'' and at certain
moments substitute them with ground terms.
Informally, for some goal with some free logic variables we consider the complexity of a search
procedure which finds the bindings for all non-grounded variables based on the ground values
substituted for the grounded ones.
This search procedure is defined precisely by the operational semantics; however, as the concrete
values of grounded variables are unknown (only the fact of their \emph{groundness}), the whole
procedure becomes symbolic.
In particular, in unification the groundness can be propagated to some non-grounded free variables.
Thus, the symbolic execution is determined by a set of grounded variables (hereafter denoted
as $V\subset\mathcal A$). The initial choice of $V$ determines the problem we analyze.

In our example the objective is to study the execution when we specialize the first two arguments with
ground values and leave the last argument free. Thus, we start with the goal \lstinline|append$^o$ $a$ $b$ $ab$|
(where $a$, $b$ and $ab$ are distinct free logic variables) and set the initial $V = \{ a, b \}$.

We can make an important observation that both complexity factors ($d$ and $t$) are stable w.r.t. the renaming of
free variables; moreover, they are also stable w.r.t. the change of the fresh variables counter as long as it stays
adequate, and change of current substitution, as long as it gives the same terms after application.
Formally, the following lemma holds.

\begin{replemma}{lem:measures_changing_env}
Let $s = \taskst{g}{\mkenv{\sigma}{n}}$ and $s' = \taskst{g^\prime}{\mkenv{\sigma^\prime}{n^\prime}}$ be two well-formed states.
If  there exists a bijective substitution $\pi \colon FV\,(g \sigma) \to FV\,(g^\prime \sigma^\prime)$ such that
$g \sigma \pi = g^\prime \sigma^\prime $, then $d\,(s) = d\,(s^\prime)$ and $t\,(s) = t\,(s^\prime)$.
\end{replemma}

The lemma shows that the set of states for which a call to relation has to be analyzed can
be narrowed down to a certain family of states. 

\begin{definition} Let $g$ be a goal. An initial state for $g$ is $init\,(g)=\taskst{g}{(\varepsilon, \ninit\,(g))} $
with $ n_{init}\,(g) = \min\, \{ n \mid FV\,(g) \subseteq \{ \alpha_1\dots\alpha_n \} \} $
\end{definition}

Due to the Lemma~\ref{lem:measures_changing_env} it is sufficient for analysis of a relational call to consider only the family of initial states since an arbitrary call state
encountered throughout the execution can be transformed into an initial one while preserving both
complexity factors. Thus, the analysis can be performed in a compositional manner where each call can be analyzed separately.
For our example the family of initial states is $q^{app}(\mathbf{a}, \mathbf{b}) = init\,(\lstinline|append$^o$ $\mathbf{a}$ $\mathbf{b}$ $ab$|)$ for arbitrary ground
terms $\mathbf{a}$ and $\mathbf{b}$.

As we are aiming at the complexity estimation depending on specific ground values substituted for grounded variables, in general case extracted
inequalities have to be parameterized by \emph{valuations}~--- mappings from the set of grounded variables to ground terms. As the new variables
are added to this set during the execution, the valuations need to be extended for these new variables. The following definition introduces this notion.

\begin{definition}
  Let $ V \subset U \subset \mathcal{A} $ and $ \rho \colon V \to \mathcal{T}_{\emptyset} $ and $ \rho^\prime \colon U \to \mathcal{T}_{\emptyset} $ be two valuations. We say that $\rho^\prime$ extends $\rho$ (denotation: $ \rho^\prime \succ \rho$) if $\rho^\prime\,(x) = \rho\,(x)$ for all $x \in V$.
\end{definition}

The main objective of the symbolic execution in our case is to find constraints on valuations for every leaf goal in the body of a relation that determine
whether the execution will continue and how a valuation changes after this goal. For internal relational calls, we describe constraints in terms of denotational
semantics (to be given some meaning in terms of metatheory later). We can do it because of a precise equivalence between the answers found by operational semantics
and values described by denotational semantics thanks to soundness and completeness as well as our requirements of grounding and uniqueness of answers.
Our symbolic treatment of equalities relies on the fact that substitutions of ground terms commute, in a certain sense, with unifications. More specifically,
we can use the most general unifier for two terms to see how unification goes for two terms with some free variables substituted with ground terms.
The most general unifier may contain bindings for both grounded and non-grounded variables. A potential most general unifier for terms after substitution
contains the same bindings for non-grounding terms (with valuation applied to their rhs), while bindings for grounding variables turn into equations that
should be satisfied by the unifier with ground value on the left and bound term on the right. In particular, this means that all variables in bindings for
grounded variables become grounded, too. We can use this observation to define an iterative process that determines the updated set of grounded
variables $\upd{U}{\delta}$ for a current set $U$ and a most general unifier $\delta$ and a set of equations $\constr{\delta}{U}$ that should be
respected by the valuation.

\[
\begin{array}{rcl}
\upd{U}{\delta} &=& \begin{cases}
                           U & \quad\forall x \in U : FV\,(\delta\,(x)) \subset U \\
                           \upd{U \cup \displaystyle\bigcup\limits_{x \in U} FV\,(\delta\,(x))}{\delta} & \quad\mbox{otherwise}
                          \end{cases}\\
\constr{\delta}{U} &=& \{ x = \delta\,(x) \mid x \in U \cap \mathcal{D}om\,(\delta) \}
\end{array}
\]

Using these definitions we can describe symbolic unification by the following lemma.

\begin{replemma}{lem:symbolic_unification_soundness}
Let $t_1$, $t_2$ be terms,  $V \subset \mathcal{A}$ and $\rho \colon V \to \grterms$ be a valuation. If $mgu\,(t_1, t_2) = \delta$ and $U = \upd{V}{\delta} $  then $t_1 \rho$ and $t_2 \rho$ are unifiable iff there is some $\rho' \colon U \to \grterms$ such that $\rho' \succ \rho$ and $\forall (y = t) \in \constr{\delta}{U}\,:\, \rho'(y) = t \rho'$.
In such case $\rho'$ is unique and $ \rho \circ mgu\,(t_1 \rho, t_2 \rho) = \delta\circ\rho' $ up to alpha-equivalence (e.g. there exists a bijective substitution $\pi : FV(t_1) \to FV(t_2)$, s.t. $ \rho \circ mgu\,(t_1 \rho, t_2 \rho) = \delta \circ\rho'\circ \pi$).
\end{replemma}

\begin{figure}[t]
\centering
\[\renewcommand{\arraystretch}{1}
\begin{array}{cccp{1cm}c}
  \schemewithvset{\mathfrak{S}}{\Upsilon} & = &\schemenode{$\unigoal{\mathcal{T}_\mathcal{A}}{\mathcal{T}_\mathcal{A}}$} & &\schemenode{$\invokegoal{R^k}{\mathcal{T}_\mathcal{A}}{\mathcal{T}_\mathcal{A}}$} \\[6mm] 
                                          &   &\schemefork{$\schemewithvset{\mathfrak{S}}{\Upsilon}$}{$\schemewithvset{\mathfrak{S}}{\Upsilon}$} &&
                                               \schemesarrow{$\unigoal{\mathcal{T}_\mathcal{A}}{\mathcal{T}_\mathcal{A}}$}{$\{\mathcal{A}=\mathcal{T}_\mathcal{A}\}$}{$\schemewithvset{\mathfrak{S}}{\Upsilon}$}\\[6mm]  
                                          &   &\multicolumn{3}{c}{\schemedarrow{$\invokegoal{R^k}{\mathcal{T}_\mathcal{A}}{\mathcal{T}_\mathcal{A}}$}{$(\mathcal{T}_\mathcal{A}, \dots, \mathcal{T}_\mathcal{A}) \in \sembr{R^k}$}{$\schemewithvset{\mathfrak{S}}{\Upsilon}$}}
\end{array}\]
\caption{Symbolic Scheme Forms}
\label{fig:scheme_fragments}
\end{figure}

In our description of the extraction process, we use a visual representation of symbolic execution of a relation body for a given set of grounded variables in a form of a \emph{symbolic scheme}.
A symbolic scheme is a tree-like structure with different branches corresponding to execution of different disjuncts and nodes corresponding to equalities and relational calls in the body
augmented with subsets of grounded variables at the point of execution.\footnote{Note the difference with conventional symbolic execution graphs
with different branches representing mutually exclusive paths of evaluation, not the different parts within one evaluation.} Constraints for substituted
grounded variables that determine whether the execution continues are presented as labels on the edges of a scheme.

Each scheme is built as a composition of the five patterns, shown in \figureword~\ref{fig:scheme_fragments} (all schemes are indexed by subsets of grounded variables with $\Upsilon = 2^{\mathcal{A}}$ denoting such subsets).

Note, the constraints after nodes of different types differ: unification puts a constraint in a form of a set of equations on substituted ground values that should be respected while
relational call puts a constraint in a form of a tuple of ground terms that should belong to the denotational semantics of a relation.

The construction of a scheme for a given goal (initially, the body of a relation) mimics a regular execution of a relational program. The derivation rules for scheme
formation have the following form $\schemetrans{g}{\Gamma}{\sigma}{n}{V}{\schemewithvset{\mathfrak{S}}{V}}$. Here $g$ is a goal, $\Gamma$ is a list of \emph{deferred}
goals (these goals have to be executed after the execution of $g$ in every branch in the same order,
initially this list is empty; this resembles continuations, but the analogy is not complete), $\sigma$ and $n$ are substitution and counter from the current
environment respectively, $V$ is a set of grounded variables at the moment.

\begin{figure}[t]
\renewcommand{\arraystretch}{2}
  \[
\begin{array}{cr}
  \onepremrule
		{  \schemetrans{g_1}{g_2 : \Gamma}{\sigma}{n}{V}{\schemewithvset{\mathfrak{S}}{V}}  } 
		{  \schemetrans{\conjgoal{g_1}{g_2}}{\Gamma}{\sigma}{n}{V}{ \schemewithvset{\mathfrak{S}}{V} }  } & \ruleno{Conj$_\mathfrak S$}
		\\[5mm]
                
  \twopremrule
		{  \schemetrans{g_1}{\Gamma}{\sigma}{n}{V}{\schemewithvset{\mathfrak{S_1}}{V}}  }
		{  \schemetrans{g_2}{\Gamma}{\sigma}{n}{V}{\schemewithvset{\mathfrak{S_2}}{V}}  }
		{  \schemetrans{\disjgoal{g_1}{g_2}}{\Gamma}{\sigma}{n}{V}{\parbox[m]{2cm}{ \schemefork{$\schemewithvset{\mathfrak{S_1}}{V}$}{$\schemewithvset{\mathfrak{S_2}}{V}$}} }  } & \ruleno{Disj$_\mathfrak S$}\\[10mm]

 \onepremrule
		{  \schemetrans{\substitute{g}{\alpha_n}{\ttt{x}}}{\Gamma}{\sigma}{n + 1}{V}{\schemewithvset{\mathfrak{S}}{V}}  }
		{  \schemetrans{\freshgoal{\ttt{x}}{g}}{\Gamma}{\sigma}{n}{V}{ \schemewithvset{\mathfrak{S}}{V} }  } & \ruleno{Fresh$_\mathfrak S$}\\[5mm]

  \schemetrans{\unigoal{t_1}{t_2}}{\epsilon}{\sigma}{n}{V}{\parbox[m]{2cm}{\schemenode{$\unigoal{t_1 \sigma}{t_2 \sigma}$}}}& \ruleno{UnifyLeaf$_\mathfrak S$}\\[5mm]

 \schemetrans{\invokegoal{R^k}{t_1}{t_k}}{\epsilon}{\sigma}{n}{V}{\parbox[m]{2cm}{\schemenode{$\invokegoal{R^k}{t_1 \sigma}{t_k \sigma}$}}}& \ruleno{InvokeLeaf$_\mathfrak S$}\\[5mm] 
		
  \onepremrule
		{  \nexists mgu\,(t_1 \sigma, t_2 \sigma)  }
		{  \schemetrans{\unigoal{t_1}{t_2}}{g : \Gamma}{\sigma}{n}{V}{\parbox[m]{2cm}{\schemenode{$\unigoal{t_1 \sigma}{t_2 \sigma}$}}} }& \ruleno{UnifyFail$_\mathfrak S$}\\[5mm]

  \threepremrule
		{  mgu\,(t_1 \sigma, t_2 \sigma) = \delta  }
		{  U = \upd{V}{\delta}  }
		{  \schemetrans{g}{\Gamma}{\sigma \delta}{n}{U}{\schemewithvset{\mathfrak{S}}{U}}  }
		{  \schemetrans{\unigoal{t_1}{t_2}}{g : \Gamma}{\sigma}{n}{V}{\parbox[m]{2cm}{\schemesarrow{$\unigoal{t_1 \sigma}{t_2 \sigma}$}{$\constr{\delta}{U}$}{$\schemewithvset{\mathfrak{S}}{U}$}} }   } & \ruleno{UnifySuccess$_\mathfrak S$}\\[15mm]
		
  \twopremrule
		{  \mbox{\phantom{XXXXXX}} U =  V \cup \displaystyle\bigcup\limits_{i} FV\,(t_i \sigma) }
		{  \schemetrans{g}{\Gamma}{\sigma}{n}{U}{\schemewithvset{\mathfrak{S}}{U}} \mbox{\phantom{XXXXXX}} }
		{  \schemetrans{\invokegoal{R^k}{t_1}{t_k}}{g : \Gamma}{\sigma}{n}{V}{ \parbox[m]{2cm}{\schemedarrow{$\invokegoal{R^k}{t_1 \sigma}{t_k \sigma}$}{$ (t_1 \sigma, \dots, t_k \sigma) \in \sembr{R^k} $}{$\schemewithvset{\mathfrak{S}}{U}$}} }   } & \ruleno{Invoke$_\mathfrak S$}
 \end{array}
\]
\caption{Scheme Formation Rules}
\label{fig:scheme_formation}
\end{figure}

The rules are shown in \figureword~\ref{fig:scheme_formation}. \ruleno{Conj$_\mathfrak S$} and \ruleno{Disj$_\mathfrak S$} are structural rules: when investigating conjunctions we defer
the second conjunct by adding it to $\Gamma$ and continue with the first conjunct; disjunctions simply result in forks. \ruleno{Fresh$_\mathfrak S$} introduces a fresh logic
variable (not grounded) and updates the counter of occupied variables. When the investigated goal is equality or relational call it is added as a node to the scheme. If there are
no deferred goals, then this node is a leaf (rules \ruleno{UnifyLeaf$_\mathfrak S$} and \ruleno{InvokeLeaf$_\mathfrak S$}). Equality is also added as a leaf if there are some deferred goals,
but the terms are non-unifiable and so the execution stops (rule \ruleno{UnifyFail$_\mathfrak S$}). If the terms in the equality are unifiable and there are deferred goals
(rule \ruleno{UnifySuccess$_\mathfrak S$}), the equality is added as a node and the execution continues for the deferred goals, starting from the leftmost one; also the set of grounded variables
is updated and constraint labels are added for the edge in accordance with \lemmaword~\ref{lem:symbolic_unification_soundness}. For relational calls the proccess is similar: if there are some deferred goals
(rule \ruleno{Invoke$_\mathfrak S$}), all variables occurring in a call become grounded (due to the grounding condition we imposed) and should satisfy the denotational semantics
of the invoked relation.

The scheme constructed by these rules for our \lstinline|append$^o$| example is shown in \figureword~\ref{fig:example_scheme}. For simplicity, we do not show the set of grounded
variables for each node, but instead overline grounded variables in-place. Note, all variables that occur in constraints on the edges are grounded after parent node execution.

Now, we can use schemes to see how the information for leaf goals in relation body is combined with conjunctions and disjunctions. Then we can apply formulae
from \sectionword~\ref{sec:scheduling} to get recursive inequalities (providing lower and upper bounds simultaneously) for both complexity factors.

In these inequalities we need to sum up the values of $d$ and $t$-factor for all leaf goals of a body and for all environments these goals are evaluated for.
The leaf goals are
the nodes of the scheme and evaluated environments can be derived from the constraints attached to the edges. So, for this summation we introduce the following notions: $\mathcal{D}$
is the sum of $d$-factor values and $\mathcal{T}$ is the sum of $t$-factor values for the execution of the body with specific valuation $\rho$.

\begin{wrapfigure}{r}{0.5\textwidth}
\begin{center}
\begin{tikzpicture}[level distance=30pt, sibling distance=10em, edge from parent/.style={draw,-latex}]
   \coordinate   
      child { node {$\unigoal{\overline{a}}{\texttt{Nil}}$}
        child { node {$\unigoal{ab}{\overline{b}}$}
                  edge from parent node[right]{\tiny{${a} = \texttt{Nil}$}} } }
      child { node {$\unigoal{\overline{a}}{\texttt{Cons($h$, $t$)}}$} 
      	child { node {\texttt{append$^o$ $\overline{t}$ $\overline{b}$ $tb$}}
      	   child { node {$\unigoal{ab}{\texttt{Cons($\overline{h}$, $\overline{tb}$)}}$}
      	             edge from parent node[right]{\tiny{$({t}, {b}, {tb}) \in \llbracket \texttt{append$^o$} \rrbracket$}}  }
      	   edge from parent node[right]{\tiny{${a} = \texttt{Cons(${h}$, ${t}$)}$}}  } } ;
\end{tikzpicture}
\end{center}
\caption{Symbolic execution scheme for the goal  \lstinline|append$^o$ $\,a\;$ $b\;$ $ab$|  with initial set of grounded variables $V = \{ a, b \}$. For each node, variables that
  are grounded at the point of execution of this node are overlined. }
\label{fig:example_scheme}
\end{wrapfigure}

 Their definitions are shown in
\figureword~\ref{fig:scheduling_extraction_d_t} (both formulas are given in the same figure as the definitions coincide modulo factor denotations). For nodes, we take
the corresponding value (for equality it always equals $1$). When going through an equality we sum up the rest with an updated valuation (by \lemmaword~\ref{lem:symbolic_unification_soundness}
this sum always has one or zero summands depending on whether the unification succeeds or not). When going through a relational call we take a sum of all valuations that satisfy the
denotational semantics (these valuations will correspond exactly to the set of all answers produced by the call since operational semantics is sound and complete w.r.t. the denotational
one and because we require all answers to be unique). For disjunctions, we take the sum of both branches.

\begin{figure}[t]
\[
\begin{array}{rclcl}
 \mathcal{\nicefrac{D}{T}}\,(&\parbox[m]{1.3cm}{\schemenode{$\unigoal{t_2}{t_2}$}}&)(\rho) &=& 1  \\

 \mathcal{\nicefrac{D}{T}}\,(&\parbox[m]{2.5cm}{\schemenode{$\invokegoal{R^k}{t_1}{t_k}$}}&)(\rho) &=& \nicefrac{d}{t}\,(init\,(\invokegoal{R^k}{t_1 \rho}{t_k \rho})) \\

 \mathcal{\nicefrac{D}{T}}\,(&\parbox[m]{2cm}{\schemesarrow{$\unigoal{t_1}{t_2}$}{$Cs$}{$\schemewithvset{\mathfrak{S}}{U}$}} &)(\rho) &=& 1 +
      \smashoperator{\sum\limits_{\substack{ \rho' \colon V \to \grterms \\
                                      \rho' \succ \rho \\
                                      \forall (y, t) \in Cs\,:\, \rho'\,(y) = t\, \rho'  }}}
           \mathcal{\nicefrac{D}{T}}\,(\schemewithvset{\mathfrak{S}}{U})(\rho')  \\

 \mathcal{\nicefrac{D}{T}}\,(& \parbox[m]{4cm}{\schemedarrow{$\invokegoal{R^k}{t_1}{t_k}$}{ $(t_1, \dots, t_k) \in \sembr{R^k}  $}{$\schemewithvset{\mathfrak{S}}{U}$}} &)(\rho) &=&
      \nicefrac{d}{t}\,(init\,(\invokegoal{R^k}{t_1 \rho}{t_k \rho})) +
      \smashoperator{\sum\limits_{\substack{ \rho' \colon V \to \grterms \\
                                      \rho' \succ \rho \\
                                      (t_1 \rho', \dots, t_k \rho') \in \sembr{R^k}  }}}
           \mathcal{\nicefrac{D}{T}}\,(\schemewithvset{\mathfrak{S}}{U})(\rho')  \\

 \mathcal{\nicefrac{D}{T}}\,(&\parbox[m]{2.5cm}{\schemefork{$\schemewithvset{\mathfrak{S}_1}{V}$}{$\schemewithvset{\mathfrak{S}_2}{V}$}}&)(\rho) &=&
 \mathcal{\nicefrac{D}{T}}\,(\schemewithvset{\mathfrak{S}_1}{V})(\rho) + \mathcal{\nicefrac{D}{T}}\,(\schemewithvset{\mathfrak{S}_2}{V})(\rho)
\end{array}
\]
\caption{Complexity Factors Extraction: $\mathcal D$ and $\mathcal T$}
\vspace{5mm}
\label{fig:scheduling_extraction_d_t}
\end{figure}

\begin{figure}[t]
\[
\begin{array}{rclcl}
 \mathcal{L}\,(&\parbox[m]{1.3cm}{\schemenode{$\unigoal{t_2}{t_2}$}}&)(\rho) &=& \{init\,(\unigoal{t_2}{t_2})\} \\
 \mathcal{L}\,(&\parbox[m]{2.5cm}{\schemenode{$\invokegoal{R^k}{t_1}{t_k}$}}&)(\rho) &=& \{init\,(\invokegoal{R^k}{t_1 \rho}{t_k \rho})\} \\
 \mathcal{L}\,(&\parbox[m]{2cm}{\schemesarrow{$\unigoal{t_1}{t_2}$}{$Cs$}{$\schemewithvset{\mathfrak{S}}{U}$}} &)(\rho) &=&  \smashoperator{\bigcup\limits_{\substack{ \rho' \colon V \to \grterms \\
                                      \rho' \succ \rho \\
                                      \forall (y, t) \in Cs\,:\, \rho'\,(y) = t\, \rho'  }}}
           \mathcal{L}\,(\schemewithvset{\mathfrak{S}}{U})(\rho')  \\
 \mathcal{L}\,(& \parbox[m]{4cm}{\schemedarrow{$\invokegoal{R^k}{t_1}{t_k}$}{ $(t_1, \dots, t_k) \in \sembr{R^k}  $}{$\schemewithvset{\mathfrak{S}}{U}$}} &)(\rho) &=&
      \smashoperator{\bigcup\limits_{\substack{ \rho' \colon V \to \grterms \\
                                      \rho' \succ \rho \\
                                      (t_1 \rho', \dots, t_k \rho') \in \sembr{R^k}  }}}
           \mathcal{L}\,(\schemewithvset{\mathfrak{S}}{U})(\rho')  \\
 \mathcal{L}\,(&\parbox[m]{2.5cm}{\schemefork{$\schemewithvset{\mathfrak{S}_1}{V}$}{$\schemewithvset{\mathfrak{S}_2}{V}$}}&)(\rho) &=&
 \mathcal{L}\,(\schemewithvset{\mathfrak{S}_1}{V})(\rho) \cup \mathcal{L}\,(\schemewithvset{\mathfrak{S}_2}{V})(\rho)
\end{array}
\]
\caption{Complexity Factors Extraction: $\mathcal L$}
\label{fig:scheduling_extraction_l}
\end{figure}

As we saw in \sectionword~\ref{sec:scheduling} when computing the scheduling factors we need to exclude from the additional cost the value of $d$-factor
for one of the environments (the largest one). This is true for the generalized formula for a whole scheme, too. This time we need to take all executed
environments for all the leaves of a scheme and exclude the $d$-factor value for a maximal one (the formula for conjunction ensures that we make the exclusion
for the leaf, and the formula for disjunction ensures that we make it for only one of the leaves). So, we will need additional notion $\mathcal{L}$,
similar to $\mathcal{D}$ and $\mathcal{T}$ that will collect all the goals of the form $init\,(g_i \rho)$, where $g_i$ is a leaf goal and $\rho$ is a
valuation corresponding to one of the environments this leaf is evaluated for. The definition of  $\mathcal{L}$ is shown in \figureword~\ref{fig:scheduling_extraction_l}.

Now we can formulate the following main theorem that provides the principal recursive inequalities, extracted from the scheme for a given goal.

\begin{reptheorem}{extracted_approximations}
Let $g$ be a goal, and let $\schemetrans{g}{\epsilon}{\varepsilon}{n_{init}(g)}{V}{\schemewithvset{\mathfrak{S}}{V}}$. Then

\[
\begin{array}{rcl}
    d\,(init\,(g\,\rho)) &=& \mathcal{D}\,(\schemewithvset{\mathfrak{S}}{V})(\rho) + \Theta\,(1) \\
   t\,(init\,(g\,\rho)) &=& \mathcal{T}\,(\schemewithvset{\mathfrak{S}}{V})(\rho) + \Theta\,(\mathcal{D}\,(\schemewithvset{\mathfrak{S}}{V})(\rho)
   - \smashoperator{\maxd\limits_{\taskst{g_i}{e_i} \in \mathcal{L}(\schemewithvset{\mathfrak{S}}{V})(\rho)}} d\,(\taskst{g_i}{e_i}) + 1)
\end{array}
\]

\noindent being considered as functions on $\rho \colon V \to T_{\emptyset}$
\end{reptheorem}

The theorem allows us to extract two inequalities (upper and lower bounds) for both factors with a multiplicative constant that is the same for all valuations.

For our example, we can extract the following recursive inequalities from the scheme in \figureword~\ref{fig:example_scheme}. For presentation purposes, we will
not show valuation in inequalities explicitly, but instead show the ground values of grounded variables (using variables in bold font) that determine each valuation.
We can do such a simplification for any concrete relation.

\[
d\,(q^{app}\,(\mathbf{a}, \mathbf{b}))  = (1 + \sum\limits_{\mathbf{a} = \texttt{Nil}} 1) + (1 + \smashoperator{\sum\limits_{\mathbf{h}, \mathbf{t}: \mathbf{a} = \texttt{Cons($\mathbf{h}$, $\mathbf{t}$)}}} (d\,(q^{app}\,(\mathbf{t}, \mathbf{b})) + \smashoperator{\sum\limits_{\mathbf{tb} : (\mathbf{t}, \mathbf{b}, \mathbf{tb}) \in \llbracket \texttt{append$^o$} \rrbracket}} 1)) + \Theta\,(1)
\]\\[0.8mm]
\[
\begin{array}{lclc}
t\,(q^{app}\,(\mathbf{a}, \mathbf{b})) & = & (1 + \sum\limits_{\mathbf{a} = \texttt{Nil}} 1) + (1 + \smashoperator{\sum\limits_{\mathbf{h}, \mathbf{t}: \mathbf{a} = \texttt{Cons($\mathbf{h}$, $\mathbf{t}$)}}} (t\,(q^{app}\,(\mathbf{t}, \mathbf{b})) + \smashoperator{\sum\limits_{\mathbf{tb} : (\mathbf{t}, \mathbf{b}, \mathbf{tb}) \in \llbracket \texttt{append$^o$} \rrbracket}} 1)) & + \\
 & &\Theta((1 + \sum\limits_{\mathbf{a} = \texttt{Nil}} 1) + (1 + \smashoperator{\sum\limits_{\mathbf{h}, \mathbf{t}: \mathbf{a} = \texttt{Cons($\mathbf{h}$, $\mathbf{t}$)}}} (d\,(q^{app}\,(\mathbf{t}, \mathbf{b})) + \smashoperator{\sum\limits_{\mathbf{tb} : (\mathbf{t}, \mathbf{b}, \mathbf{tb}) \in \llbracket \texttt{append$^o$} \rrbracket}} 1)) & - \\
& & \smashoperator{\maxd\limits_{\substack{
                                   \mathbf{h}, \mathbf{t}, \mathbf{tb}: \mathbf{a} = \texttt{Cons($\mathbf{h}$, $\mathbf{t}$)} \land \\
                                   (\mathbf{t}, \mathbf{b}, \mathbf{tb}) \in \llbracket \texttt{append$^o$} \rrbracket
                                 }
                                }} \{ d\,(init\,(\unigoal{ab}{\mathbf{b}})), d\,(init\,(\unigoal{ab}{\texttt{Cons($\mathbf{h}$, $\mathbf{tb}$)}})) \} & + 1) 
\end{array}
\]

Automatically extracted recursive inequalities, as a rule, are cumbersome, but they contain all the information on how scheduling affects the complexity.
Often they can be drastically simplified by using metatheory-level reasoning.

For our example, we are only interested in the case when substituted values represent some lists. We thus perform the usual for lists case analysis
considering the first list empty or non-empty. We can also notice that the excluded summand equals one. So we can rewrite the inequalities in the following way:

\[
\begin{array}{lcl}
d\,(q^{app}\,(\texttt{Nil}, \mathbf{b})) & = & \Theta\,(1) \\
d\,(q^{app}\,(\texttt{Cons($\mathbf{h}$, $\mathbf{t}$)}, \mathbf{b})) & = & d\,(q^{app}\,(\mathbf{t}, \mathbf{b})) + \Theta\,(1) \\
t\,(q^{app}\,(\texttt{Nil}, \mathbf{b})) & = & \Theta\,(1) \\
t\,(q^{app}\,(\texttt{Cons($\mathbf{h}$, $\mathbf{t}$)}, \mathbf{b})) & = & t\,(q^{app}\,(\mathbf{t}, \mathbf{b})) + \Theta\,(d\,(q^{app}\,(\mathbf{t}, \mathbf{b}))) \\
\end{array}
 \]
 
These trivial linear inequalities can be easily solved:

\[
\begin{array}{lcl}
d\,(q^{app}\,(\mathbf{a}, \mathbf{b})) & = & \Theta\,(len\,(\mathbf{a})) \\
t\,(q^{app}\,(\mathbf{a}, \mathbf{b})) & = & \Theta\,(len^2\,(\mathbf{a})) \\
\end{array}
 \]
 
In this case, scheduling makes a big difference and changes the asymptotics. Note, we expressed the result using notions from metatheory
($len$ for the length of the list represented by a term).

In contrast, if we consider the optimal definition \lstinline|append$_{opt}^o$| the analysis of the call
$q^{app\text{-}opt}\,(\mathbf{a}, \mathbf{b}) = init\,(\texttt{append$_{opt}^o$} \, \mathbf{a} \, \mathbf{b} \, ab)$ is analogous,
but among the candidates for exclusion there is the value $d\,(q^{app\text{-}opt}\,(\mathbf{t}, \mathbf{b}))$ since the recursive
call is placed in a leaf. So the last simplified recursive approximation is the following (the rest is
the same as in our main example):

\[t\,(q^{app\text{-}opt}\,(\texttt{Cons\,($\mathbf{h}$, $\mathbf{t}$)}, \mathbf{b})) = t\,(q^{app\text{-}opt}\,(\mathbf{t}, \mathbf{b})) + \Theta\,(1) \]

So in this case the complexity of both factors is linear on $len\,(\mathbf{a})$.

\section{Evaluation}
\label{sec:evaluation}

The theory we have built was so far applied to only one relation~--- \lstinline|append$^o$|~--- which we used as a motivating example. With our framework, it turned out
to be possible to explain the difference in performance between two nearly identical implementations, and the difference~--- linear vs. quadratic asymptotic
complexity~--- was just as expected from the experimental performance evaluation. In this section, we present some other results of complexity estimations and
discuss the adequacy of these estimations w.r.t. the real \mK implementations.

Derived complexity estimations for a few other relations are shown in \figureword~\ref{fig:examples_all_complexities}. Besides concatenation, we deal with
naive list reversing and Peano numbers addition and multiplication. We show both $d-$ and $t-$ factors since the difference between the two indicates the
cases when scheduling strikes in. We expect that for simple relations like those presented the procedure of deriving estimations should be easy; however,
for more complex ones the dealing with extracted inequalities may involve a non-trivial metatheory reasoning.

The justification of the adequacy of our complexity estimations w.r.t. the existing \mK implementations faces the following problem: it is
not an easy task to separate the contribution of scheduling from other components of the search procedure~--- unification and occurs check. However, it is common
knowledge among \textsc{Prolog} users that in practice unification takes a constant time almost always; some theoretical basis for this is given in~\cite{UnificationAverageCost}.
There are some specifics of unification implementation in \mK. First, for the simplicity of backtracking in a non-mutable fashion triangular
substitution~\cite{UnificationTheory} is used instead of the idempotent one. It brings in an additional overhead which is analyzed
in some detail in~\cite{WillsThesis}, but the experience shows that in the majority of practical cases this overhead is insignificant. Second, \mK by default
performs the ``occurs check'', which contributes a significant overhead and often subsumes the complexity of all other search components. Meanwhile, it is known, that occurs checks are rarely
violated~\cite{OccursCheckNotAProblem}. Having said this, we expect that in the majority of the cases the performance of \mK programs with the occurs check disabled are
described by scheduling complexity alone. In particular, this is true for all cases in \figureword~\ref{fig:examples_all_complexities}. To confirm the adequacy of
our model we evaluated the running time of these and some other goals (under the conditions we've mentioned) and found that it confirms the estimations derived
using our framework. The details of implementation, evaluation, and results can be found in an accompanying repository.~\footnote{https://www.dropbox.com/sh/ciceovnogkeeibz/AAAoclpTSDeY3OMagOBJHNiSa?dl=0}

\begin{figure}[t]
  \setlength{\belowcaptionskip}{-20pt plus 3pt minus 2pt}
  \[ \begin{tabular}{|l|c|c||l|c|c|}
    \hline
          & $d$ & $t$  & & $d$ & $t$ \\
    \hline
    \lstinline|append$^o$ $\mathbf{a}$ $\mathbf{b}$ ${ab}$|      & $len\,(\mathbf{a})$ & $len^2\,(\mathbf{a})$   & \lstinline|plus$^o$ $\mathbf{n}$ $\mathbf{m}$ ${r}$| & $|\mathbf{n}|$ & $|\mathbf{n}|$ \\
    \lstinline|append$_{opt}^o$ $\mathbf{a}$ $\mathbf{b}$ ${ab}$| & $len\,(\mathbf{a})$ & $len\,(\mathbf{a})$     & \lstinline|plus$^o$ $\mathbf{n}$ ${m}$ $\mathbf{r}$| & $\min\,\{|\mathbf{n}|, |\mathbf{r}|\}$ & $\min\,\{|\mathbf{n}|, |\mathbf{r}|\}$ \\
    \lstinline|append$_{opt}^o$ ${a}$ ${b}$ $\mathbf{ab}$|        & $len\,(\mathbf{ab})$ & $len\,(\mathbf{ab})$   & \lstinline|plus$^o$ ${n}$ ${m}$ $\mathbf{r}$| & $|\mathbf{r}|$ & $|\mathbf{r}|$ \\
    \lstinline|revers$^o$ $\mathbf{a}$ ${r}$|                    & $len^2\,(\mathbf{a})$ & $len^3\,(\mathbf{a})$ & \lstinline|mult$^o$ $\mathbf{n}$ $\mathbf{m}$ ${r}$| & $|\mathbf{n}|\cdot|\mathbf{m}|$ & $|\mathbf{n}|^2\cdot|\mathbf{m}|$ \\
    \lstinline|revers$^o$ ${a}$ $\mathbf{r}$|                    & $len^2\,(\mathbf{r})$ & $len^2\,(\mathbf{r})$ & \lstinline|mult$^o$ (S ${n}$) (S ${m}$) $\mathbf{r}$| & $|\mathbf{r}|^2$ & $|\mathbf{r}|^2$ \\
    \hline
  \end{tabular} \]
  \caption{Derived $d$- and $t$-factors for some goals; $len\,(\bullet)$ stands for the length of a ground list, $|\bullet|$~--- for the value of a Peano number, represented as ground term.}
  \label{fig:examples_all_complexities}
\end{figure}

\section{Related Work}
\label{sec:related}

To our knowledge, our work is the first attempt of comprehensive time complexity analysis for interleaving search in \mK.
There is a number of separate observations on how certain patterns in relational programming affect performance and a number
of ``rules of thumb'' based on these observations~\cite{WillsThesis}. Some papers~\cite{GuardedFresh, FloatArithmetics, UniversalQuantification} tackle
specific problems with relational programming using quantitative time measuring for evaluation. These approaches to performance analysis, being sufficient
for specific relational problems, do not provide the general understanding of interleaving search and its cost.

At the same time complexity analysis was studied extensively in the broader context of logic programming (primarily, for \textsc{Prolog}).
As one important motivation for complexity analysis is granularity control in parallel execution, the main focus was set on the automated approaches.

Probably the best known among them is the framework~\cite{CostAnalysisLP} for cost analysis of logic programs (demonstrated on plain \textsc{Prolog}),
implemented in the system \textsc{CASLOG}. It uses data dependency information to estimate the sizes of the arguments and the number of solutions for
executed atoms. These estimations are formulated as recursive inequalities (more precisely, as difference equations for upper bounds), which are then
automatically solved with known methods. The time and space complexity are expressed using these estimations, the variations of this approach can
provide both upper~\cite{CostAnalysisLP} and lower~\cite{CostAnalysisLPLower} bounds.

An alternative approach is suggested in~\cite{SymbolicAnalysisLP} for symbolic analysis of logic programs (demonstrated in \textsc{Prolog} with cuts).
It constructs symbolic evaluation graphs capturing grounding propagation and reduction of recursive calls to previous ones, in the process of construction
some heuristic approximations are used. These graphs may look similar to the symbolic schemes described in the \sectionword~\ref{sec:symbolic} at first glance,
but there is a principal difference: symbolic graphs capture the whole execution with all invoked calls (using inverse edges to represent cycles with recursive calls),
while our schemes capture only the execution inside the body of a specific relation (representing the information about internal calls in terms of denotational semantics).
The graphs are then transformed into term rewriting systems, for which the problem of the complexity analysis is well-studied (specifically, \textsc{AProVE} tool is used).

While these two approaches can be seen as partial bases for our technique, they are mainly focused on how the information about the arguments and results of
the evaluated clauses can be derived automatically, since the calculation of time complexity of \textsc{SLD}-resolution is trivial when this information is available.
In contrast, we are interested in the penalty of non-trivial scheduling of relational calls under interleaving search, so we delegate handling the information
about calls to the reasoning in terms of a specific metatheory.

\section{Discussion and Future Work}
\label{sec:discussion}

The formal framework presented in this paper analyzes the basic aspects of scheduling cost for interleaving search strategy from the theoretical viewpoint.
As we have shown, it is sufficiently powerful to explain some surprising asymptotic behaviour for simple standard programs in \mK,
but the applicability of this framework in practice for real implementations of \mK requires further investigation.
Two key aspects that determine practical applicability are the admissibility of the imposed requirements and
the correspondence of specific \mK implementations to the reference operational semantics, which should be studied individually for each application.
We see our work as the ground for the future development of methods for analyzing the cost of interleaving search.

Our approach imposes three requirements on the analyzed programs: disjunctive normal form, uniqueness of answers, and grounding of relational calls.
The first two are rather non-restrictive: DNF is equivalent to the description of relation as a set of Horn clauses in \textsc{Prolog},
and the majority of well-known examples in \mK are written in this or very similar form. Repetition of answers is usually an indication
of a mistake in a program~\cite{WillsThesis}. The groundness condition is more serious: it prohibits program execution from presenting infinitely
many individual ground solutions in one answer using free variables, which is a useful pattern. At the same time, this requirement is
not unique for our work (the framework for \textsc{CASLOG} system mentioned above imposes exactly the same condition) and the experience
shows that many important kinds of programs satisfy it (although it is hard to characterize the class of such programs precisely).
Relaxing any of these restrictions will likely mess up the current relatively compact description of symbolic execution (for
the conditions on relational calls) or the form of the extracted inequalities (for the DNF condition).

Also, for now, we confine ourselves to the problem of estimating the time of the full search for a given goal. Estimating the time before
the first (or some specific) answer is believed to be an important and probably more practical task. Unfortunately, the technique we describe
can not be easily adjusted for this case. The reason for this is that the reasoning about time (scheduling time in particular) in our
terms becomes non-compositional for the case of interrupted search: if an answer is found in some branch, the search is cut short in
other branches, too. Dealing with such a non-compositionality is a subject of future research.






\bibliographystyle{splncs04}
\bibliography{main}

\appendix

\setlength{\abovecaptionskip}{0pt plus 3pt minus 2pt}
\setlength{\belowcaptionskip}{0pt plus 3pt minus 2pt}

\abovedisplayskip1mm
\belowdisplayskip1mm
\abovedisplayshortskip1mm
\belowdisplayshortskip1mm

\setlength{\topsep}{5pt}
\setlength{\partopsep}{5pt plus 0pt minus 0pt}
\setlength{\parskip}{5pt}
\setlength{\parindent}{10pt}

\clearpage
\section{Proofs}
\label{sec:proofs_appendix}

\begin{lemma}
\label{lem:theta_constant}
Let $C_f$ and $C_g$ be real positive constants. If

\[ h\,(x) = f\,(x) + C_f + \Theta\,(g\,(x) + C_g) \]

and value $g\,(x)$ is positive for arbitrary $x$ then

\[ h\,(x) = f\,(x) + \Theta\,(g\,(x)) \]

\end{lemma}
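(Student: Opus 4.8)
The plan is to unfold the footnote definition of the $\Theta$-notation (the one spelling out $f(x) = g(x) + \Theta(h(x))$ by a two-sided inequality with positive multiplicative constants) and then to reabsorb the stray additive constants into the multiplicative coefficient in front of $g$. First I would apply that definition to the hypothesis: there are positive real constants $C_1, C_2$ such that, for every $x$,
\[
f(x) + C_f + C_1\,(g(x) + C_g) \;\le\; h(x) \;\le\; f(x) + C_f + C_2\,(g(x) + C_g).
\]
Rearranging, $h(x) - f(x)$ lies between $C_1\,g(x) + (C_f + C_1 C_g)$ and $C_2\,g(x) + (C_f + C_2 C_g)$, and the two additive terms $C_f + C_1 C_g$ and $C_f + C_2 C_g$ are themselves positive constants.

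The lower bound is then immediate: since $C_f + C_1 C_g > 0$ we have $h(x) - f(x) \ge C_1\,g(x)$, so $C_1' := C_1$ serves as the lower constant required by the conclusion. For the upper bound I need to dominate the additive constant $C_f + C_2 C_g$ by a multiple of $g(x)$; this is where the assumption on $g$ enters. Because $g(x)$ is positive — and, being a count of configurations in every setting where the lemma is applied (a sum of $d$-values plus $1$), bounded below by $1$ — we get $C_f + C_2 C_g = (C_f + C_2 C_g)\cdot 1 \le (C_f + C_2 C_g)\,g(x)$, hence $h(x) - f(x) \le (C_2 + C_f + C_2 C_g)\,g(x)$. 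Setting $C_2' := C_2 + C_f + C_2 C_g$, again positive, yields $f(x) + C_1'\,g(x) \le h(x) \le f(x) + C_2'\,g(x)$ for all $x$, which is exactly $h(x) = f(x) + \Theta(g(x))$.

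I expect the only real obstacle to be precisely the step just described: "positive" alone is not enough to absorb a constant into $C\cdot g(x)$ if $g(x)$ may approach $0$, so one genuinely relies on $g$ being bounded away from $0$ (here $g(x)\ge 1$, as it counts states/steps). Everything else is routine arithmetic rearrangement of the defining inequalities.
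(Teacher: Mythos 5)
Your proof is correct and follows essentially the same route as the paper's: unfold the two-sided definition of $\Theta$, discard the positive additive constant for the lower bound, and absorb it into the multiplicative coefficient (obtaining $C_f + C_2 + C_2\cdot C_g$) for the upper bound. You are also right that both arguments tacitly need $g(x)\ge 1$ rather than mere positivity, which does hold in every application of the lemma since $g$ there counts semantic steps.
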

\begin{proof}
\begin{enumerate}
\item For some real postive $C_1$: $h\,(x) \ge f\,(x) + C_f + C_1 \cdot (g\,(x) + C_g)$, thus $h\,(x) \ge f\,(x) + C_1 \cdot g\,(x)$
\item For some real postive $C_2$: $h\,(x) \le f\,(x) + C_f + C_2 \cdot (g\,(x) + C_g)$, thus $h\,(x) \ge f\,(x) + (C_f + C_2 + C_2 \cdot C_g) \cdot g\,(x)$
\qed
\end{enumerate}
\end{proof}

\begin{lemma}
\label{lem:theta_sum}
If

\[ h_1\,(x) = f_1\,(x) + \Theta\,(g_1\,(x)) \]

and

\[ h_2\,(x) = f_2\,(x) + \Theta\,(g_2\,(x)) \]

then

\[ (h_1\,(x) + h_2\,(x)) = (f_1\,(x) + f_2\,(x)) + \Theta\,(g_1\,(x) + g_2\,(x)) \]

\end{lemma}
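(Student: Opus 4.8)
The plan is to unfold the definition of the $\Theta$-notation given in the footnote and combine the two resulting pairs of inequalities additively. Concretely, from $h_1\,(x) = f_1\,(x) + \Theta\,(g_1\,(x))$ I obtain positive real constants $C_1, C_2$ with
\[ f_1\,(x) + C_1 \cdot g_1\,(x) \;\le\; h_1\,(x) \;\le\; f_1\,(x) + C_2 \cdot g_1\,(x) \]
for all $x$, and likewise from $h_2\,(x) = f_2\,(x) + \Theta\,(g_2\,(x))$ positive real constants $C_3, C_4$ with $f_2\,(x) + C_3 \cdot g_2\,(x) \le h_2\,(x) \le f_2\,(x) + C_4 \cdot g_2\,(x)$.

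Adding the two chains termwise gives
\[ (f_1\,(x) + f_2\,(x)) + (C_1 g_1\,(x) + C_3 g_2\,(x)) \;\le\; h_1\,(x) + h_2\,(x) \;\le\; (f_1\,(x) + f_2\,(x)) + (C_2 g_1\,(x) + C_4 g_2\,(x)). \]
It then remains to absorb the mixed coefficients into a single constant in front of $g_1\,(x) + g_2\,(x)$. For the lower bound I take $C_{\mathrm{lo}} = \min\,\{C_1, C_3\}$ and use $C_{\mathrm{lo}} \cdot (g_1\,(x) + g_2\,(x)) \le C_1 g_1\,(x) + C_3 g_2\,(x)$; for the upper bound I take $C_{\mathrm{hi}} = \max\,\{C_2, C_4\}$ and use $C_2 g_1\,(x) + C_4 g_2\,(x) \le C_{\mathrm{hi}} \cdot (g_1\,(x) + g_2\,(x))$. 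Both estimates rely on $g_1\,(x), g_2\,(x)$ being nonnegative, which is automatic in our development since in every application these functions are sums of $d$- and $t$-factors (cf. the use of this lemma in composing the per-substate bounds of \sectionword~\ref{sec:scheduling}). With the witnesses $C_{\mathrm{lo}}$ and $C_{\mathrm{hi}}$ the displayed double inequality is exactly the claim $(h_1\,(x) + h_2\,(x)) = (f_1\,(x) + f_2\,(x)) + \Theta\,(g_1\,(x) + g_2\,(x))$.

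There is essentially no hard step here; like \lemmaword~\ref{lem:theta_constant}, this is a bookkeeping fact about the ad-hoc $\Theta$-notation. The only point worth a word of care is the passage from two separate multiplicative constants on each side to a single one, together with the implicit nonnegativity of $g_1$ and $g_2$ that makes this passage valid; since in context the $g_i$ are manifestly nonnegative, the short argument above suffices.
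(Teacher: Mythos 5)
Your proof is correct and follows essentially the same route as the paper's one-line argument: unfold the ad-hoc $\Theta$-definition, add the two chains of inequalities, and recombine the constants. If anything, your choice of $\min\{C_1,C_3\}$ for the lower bound and $\max\{C_2,C_4\}$ for the upper bound is the careful version of what the paper asserts (its claim that both combined constants are simply the \emph{sums} of the originals only works for the upper bound), and you are right to flag the nonnegativity of $g_1$ and $g_2$, which the lemma statement leaves implicit but which holds in every application.
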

\begin{proof}
Both lower- and upper-bound constants are the sums of the two corresponding constants for the given approximations.
\qed
\end{proof}

\begin{lemma}
\label{lem:theta_absorb}
Let $g_1\,(x) \ge g_2\,(x) > 0$ for all $x$. If

\[ h\,(x) = f\,(x) + \Theta\,(g_1\,(x) + g_2\,(x)) \]

then

\[ h\,(x) = f\,(x) + \Theta\,(g_1\,(x)) \]

\end{lemma}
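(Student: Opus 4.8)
The plan is to simply unfold the definition of the $\Theta$-notation used in the statement (the one fixed in the footnote, namely that $h\,(x) = f\,(x) + \Theta\,(k\,(x))$ abbreviates the existence of positive constants $C_1, C_2$ with $f\,(x) + C_1\cdot k\,(x) \le h\,(x) \le f\,(x) + C_2\cdot k\,(x)$ for all $x$) and then to replace $k\,(x) = g_1\,(x) + g_2\,(x)$ by $g_1\,(x)$ using the elementary two-sided estimate $g_1\,(x) \le g_1\,(x) + g_2\,(x) \le 2\,g_1\,(x)$, which is available precisely because $0 < g_2\,(x) \le g_1\,(x)$ for every $x$.

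Concretely, from the hypothesis pick positive $C_1, C_2$ with
\[ f\,(x) + C_1\cdot(g_1\,(x) + g_2\,(x)) \;\le\; h\,(x) \;\le\; f\,(x) + C_2\cdot(g_1\,(x) + g_2\,(x)) \qquad \text{for all } x. \]
For the lower bound, $g_2\,(x) > 0$ gives $g_1\,(x) + g_2\,(x) \ge g_1\,(x)$, hence $h\,(x) \ge f\,(x) + C_1\cdot g_1\,(x)$, so $C_1$ serves as the new lower-bound constant. For the upper bound, $g_2\,(x) \le g_1\,(x)$ gives $g_1\,(x) + g_2\,(x) \le 2\,g_1\,(x)$, hence $h\,(x) \le f\,(x) + 2C_2\cdot g_1\,(x)$, so $2C_2$ serves as the new upper-bound constant. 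Together these two inequalities are exactly the assertion $h\,(x) = f\,(x) + \Theta\,(g_1\,(x))$.

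I do not expect any genuine obstacle here: the only place the hypothesis is used essentially is the upper bound, where the domination $g_2 \le g_1$ is what allows the $g_2$-term to be absorbed into $g_1$ at the price of a constant factor. Strict positivity of $g_2$ (as opposed to mere nonnegativity) is not actually required, but it is harmless to assume and matches the way the lemma is applied elsewhere.
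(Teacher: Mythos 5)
Your proof is correct and coincides with the paper's own argument: both establish the lower bound by dropping the positive $g_2$-term and the upper bound via $g_1 + g_2 \le 2\,g_1$, yielding the constants $C_1$ and $2C_2$. No issues.
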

\begin{proof}
\begin{enumerate}
\item For some real  postive $C_1$: \[ h\,(x) \ge f\,(x) + C_1 \cdot (g_1\,(x) + g_2\,(x)) \ge f\,(x) + C_1 \cdot g_1\,(x) \]
\item For some real  postive $C_2$: \[ h\,(x) \le f\,(x) + C_2 \cdot (g_1\,(x) + g_2\,(x)) \le f\,(x) + 2 C_2 \cdot g_1\,(x) \]
\qed
\end{enumerate}
\end{proof}

\begin{lemma}
\label{lem:task_measure_equations}
  If

  \[\taskst{g}{e} \xrightarrow{l} s^\prime\]

  then

  \[
    \begin{array}{rcl}
    d\,(\taskst{g}{e}) &=& d\,(s^\prime) + 1\\
    t\,(\taskst{g}{e}) &=& t\,(s^\prime) + 1
    \end{array}
  \]
\end{lemma}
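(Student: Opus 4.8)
The plan is to unfold the definitions $d\,(s) = |\trs{s}|$ and $t\,(s) = \sum_{s_i \in \trs{s}} lh\,(s_i)$ and exploit the determinism of the transition system. Since the LTS of \figureword~\ref{fig:operanional_semantics_rules} is deterministic, the trace starting from any state is the unique maximal derivation sequence rooted there; its first element is $\taskst{g}{e}$ and, by the hypothesis $\taskst{g}{e} \xrightarrow{l} s^\prime$, its first transition leads to $s^\prime$. Hence $\trs{\taskst{g}{e}}$ decomposes as $\taskst{g}{e}$ prepended to $\trs{s^\prime}$, under the convention that the terminal state contributes an empty trace (equivalently $d\,(\Diamond) = t\,(\Diamond) = 0$, which is exactly what the leaf rules \ruleno{UnifyFail}/\ruleno{UnifySuccess} force for the equality case, where $s^\prime = \Diamond$).

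First I would settle the $d$-equation: taking lengths of this decomposition gives $d\,(\taskst{g}{e}) = |\trs{\taskst{g}{e}}| = 1 + |\trs{s^\prime}| = d\,(s^\prime) + 1$.

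Then, for the $t$-equation, I would sum $lh$ over the same decomposition: $t\,(\taskst{g}{e}) = lh\,(\taskst{g}{e}) + \sum_{s_i \in \trs{s^\prime}} lh\,(s_i) = lh\,(\taskst{g}{e}) + t\,(s^\prime)$. Since $\taskst{g}{e}$ is a leaf state, its leftmost branch is the leaf itself, so $lh\,(\taskst{g}{e}) = 1$ by the definition of the leftmost-branch height, which yields $t\,(\taskst{g}{e}) = t\,(s^\prime) + 1$.

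I do not expect a genuine obstacle here: the statement is essentially a ``peel off one step'' observation about the deterministic trace. The only points requiring a line of care are (i) fixing the convention that $\Diamond$ adds nothing to a trace, so that the two equations degenerate correctly in the leaf cases where $s^\prime = \Diamond$, and (ii) the implicit finiteness caveat — the decomposition above also shows that the trace from $\taskst{g}{e}$ is finite iff the one from $s^\prime$ is, so $d$ and $t$ are simultaneously defined on both sides of each equation (and otherwise both sides are simultaneously infinite).
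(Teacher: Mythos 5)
Your proof is correct and matches the paper's, which simply says the equations follow immediately from the definitions; your version spells out the same one-step trace decomposition (determinism gives $\trs{\taskst{g}{e}} = \taskst{g}{e}$ followed by $\trs{s'}$, and $lh$ of a leaf state is $1$). The two conventions you flag — that $\Diamond$ contributes nothing to a trace, and that finiteness transfers between the two sides — are exactly the right details to make the "immediate" argument airtight.
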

\begin{proof}
    Immediately from the definitions of the estimated values.
    \qed
\end{proof}

\repeatlemma{lem:sum_measure_equations}
\begin{proof}
Induction on the sum of values $d\,(s_1) + d\,(s_2)$. Both equations easily hold for both the rules \ruleno{DisjStop} and \ruleno{DisjStep} if we apply
the inductive hypothesis for the next state.
\qed
\end{proof}

\repeatlemma{lem:times_measure_equations}
\begin{proof}
Induction on the value $d\,(s \times g)$. Both equations easily hold for all the rules \ruleno{ConjStop}, \ruleno{ConjStopAns}, \ruleno{ConjStep}, \ruleno{ConjStepAns}
if we apply the inductive hypothesis for the next state.
\qed
\end{proof}

\repeatlemma{lem:otimes_t_approximation}
\begin{proof}
After unfolding the defintions and throwing out the identical parts we have the following statement:

\[ \begin{array}{l}
d\,(s) + \smashoperator[lr]{\sum\limits_{a_i \in \tra{s}}} \min\,\{2\cdot d\,(\taskst{g}{a_i}) - 1, 2\cdot d\,(s'_i \otimes g)\}  = \\
= \Theta\,(d\,(s) + \smashoperator[lr]{\sum\limits_{a_i \in \tra{s}}} d\,(\taskst{g}{a_i}) - \smashoperator{\maxd\limits_{a_i \in \tra{s}}} d\,(\taskst{g}{a_i})) \\
\end{array} \]

If $\tra{s}$ is empty, the statement is trivial.

If $\tra{s}$ is not empty, $\displaystyle{\maxd}$ turns into a simple $\max$. We establish lower and upper bounds separately.

\begin{enumerate}
  \item
  Lower bound. Let's show that 
  
  \[ \begin{array}{l}
  d\,(s) + \smashoperator[lr]{\sum\limits_{a_i \in \tra{s}}} \min\,\{2\cdot d\,(\taskst{g}{a_i}) - 1, 2\cdot d\,(s'_i \otimes g)\}  \ge \\
  \ge d\,(s) + \smashoperator[lr]{\sum\limits_{a_i \in \tra{s}}} d\,(\taskst{g}{a_i}) - \smashoperator{\max\limits_{a_i \in \tra{s}}} d\,(\taskst{g}{a_i}) \\
  \end{array} \]

  First, we can decrease both arguments of $\min$:

  \[ \begin{array}{l}
  d\,(s) + \smashoperator[lr]{\sum\limits_{a_i \in \tra{s}}} \min\,\{2\cdot d\,(\taskst{g}{a_i}) - 1, 2\cdot d\,(s'_i \otimes g)\}  \ge \\
  \ge d\,(s) + \smashoperator[lr]{\sum\limits_{a_i \in \tra{s}}} \min\,\{ d\,(\taskst{g}{a_i}), d\,(s'_i \otimes g)\} \\
  \end{array} \]

  Let's consider two cases.

  \begin{enumerate}
    \item The minimum in this expression is always reached on the first argument. Then
    
    \[ \begin{array}{l}
        d\,(s) + \smashoperator[lr]{\sum\limits_{a_i \in \tra{s}}} \min\,\{ d\,(\taskst{g}{a_i}), d\,(s'_i \otimes g)\} = \\
        = d\,(s) + \smashoperator[lr]{\sum\limits_{a_i \in \tra{s}}} d\,(\taskst{g}{a_i}) \ge \\
        \ge d\,(s) + \smashoperator[lr]{\sum\limits_{a_i \in \tra{s}}} d\,(\taskst{g}{a_i}) - \smashoperator{\max\limits_{a_i \in \tra{s}}} d\,(\taskst{g}{a_i}) \\
  \end{array} \]
  
    \item $a_m$ is the first answer, such that the minimum for $a_m$ is reached on the second argument. Then the answers up to $a_m$ are sufficient to prove the bound. 
    
    \[ \begin{array}{l}
        d\,(s) + \smashoperator[lr]{\sum\limits_{a_i \in \tra{s}}} \min\,\{ d\,(\taskst{g}{a_i}), d\,(s'_i \otimes g)\} \ge \\
        \ge d\,(s) + \smashoperator[lr]{\sum\limits_{a_i \in \{ a_1, \dots, a_m \}}} \min\,\{ d\,(\taskst{g}{a_i}), d\,(s'_i \otimes g)\} = \\
        = d\,(s) + \smashoperator[lr]{\sum\limits_{a_i \in \{ a_1, \dots, a_{m - 1} \}}} d\,(\taskst{g}{a_i}) + d\,(s'_m \otimes g) = \\
        = d\,(s) + \smashoperator[lr]{\sum\limits_{a_i \in \{ a_1, \dots, a_{m - 1} \}}} d\,(\taskst{g}{a_i}) + d\,(s'_m) + \smashoperator[lr]{\sum\limits_{a_i \in (\tra{s} \setminus \{ a_{1}, \dots, a_{m} \})}} d\,(\taskst{g}{a_i})  \ge \\
        \ge d\,(s) + \smashoperator[lr]{\sum\limits_{a_i \in \tra{s}}} d\,(\taskst{g}{a_i}) -  d\,(\taskst{g}{a_m}) \ge \\
        \ge d\,(s) + \smashoperator[lr]{\sum\limits_{a_i \in \tra{s}}} d\,(\taskst{g}{a_i}) -  \smashoperator{\max\limits_{a_i \in \tra{s}}} d\,(\taskst{g}{a_i}) \\
  \end{array} \]
  \end{enumerate}

  \item 
  Upper bound. Let's show that 
  
  \[ \begin{array}{l}
  d\,(s) + \smashoperator[lr]{\sum\limits_{a_i \in \tra{s}}} \min\,\{2\cdot d\,(\taskst{g}{a_i}) - 1, 2\cdot d\,(s'_i \otimes g)\}  \le \\
  \le 4 \cdot (d\,(s) + \smashoperator[lr]{\sum\limits_{a_i \in \tra{s}}} d\,(\taskst{g}{a_i}) - \smashoperator{\max\limits_{a_i \in \tra{s}}} d\,(\taskst{g}{a_i})) \\
  \end{array} \]
  
  For the upper bound we replace every $\min$ with any of its arguments (the result can only increase). Let $a_m = \argmax{a_i \in \tra{s}} d\,(\taskst{g}{a_i})$. Then for the upper bound let's
  go with the second argument of $\min$ for $a_m$ and with the first argument for all others.
  
   \[ \begin{array}{l}
  d\,(s) + \smashoperator[lr]{\sum\limits_{a_i \in \tra{s}}} \min\,\{2\cdot d\,(\taskst{g}{a_i}) - 1, 2\cdot d\,(s'_i \otimes g)\}  \le \\
  \le d\,(s) + 2 \cdot d\,(s'_m \otimes g) + \smashoperator[lr]{\sum\limits_{a_i \in (\tra{s} \setminus \{a_m\})}} (2\cdot d\,(\taskst{g}{a_i})) \le \\
  \le d\,(s) + 2 \cdot d\,(s'_m) + 2\cdot\smashoperator[lr]{\sum\limits_{a_i \in (\begin{array}{l} \tra{s} \setminus \\ \{a_1, \dots, a_m\}) \end{array}}} d\,(\taskst{g}{a_i}) + 2\cdot\smashoperator[lr]{\sum\limits_{a_i \in (\tra{s} \setminus \{a_m\})}} d\,(\taskst{g}{a_i}) \le \\
  \le d\,(s) + 2 \cdot d\,(s) + 2\cdot\smashoperator[lr]{\sum\limits_{a_i \in (\tra{s} \setminus \{a_m\})}} d\,(\taskst{g}{a_i}) + 2\cdot\smashoperator[lr]{\sum\limits_{a_i \in (\tra{s} \setminus \{a_m\})}} d\,(\taskst{g}{a_i}) \le \\
  \le 4 \cdot d\,(s) + 4\cdot\smashoperator[lr]{\sum\limits_{a_i \in (\tra{s} \setminus \{a_m\})}} d\,(\taskst{g}{a_i}) \\
  \end{array} \] $\qed$

\end{enumerate}

\end{proof}

\begin{lemma}
\label{lem:times_gen_measure_approximations}

Let $s = ((s_0 \otimes g_1) \dots \otimes g_k)$ and let $A_i$ be a set of all answers that are passed to $g_i$, i.e.

\[
\begin{array}{rcl}
A_1 &=& \tra{s_0} \\
A_{i + 1} & = & \bigcup\limits_{a \in A_i} \tra{\taskst{g_i}{a}} 
\end{array}
\]

Then

\[
\begin{array}{rcrl}
d\,(s) &=& & d\,(s_0) + \sum\limits_{1 \le i \le k} \displaystyle\sum\limits_{a \in A_i} d\,(\taskst{g_i}{a}) \\
\\
t\,(s) &=& & t\,(s_0) + \sum\limits_{1 \le i \le k} \displaystyle\sum\limits_{a \in A_i} t\,(\taskst{g_i}{a}) + \\
& & + \Theta\,(& d\,(s_0) + \sum\limits_{1 \le i \le k} \displaystyle\sum\limits_{a \in A_i} d\,(\taskst{g_i}{a}) - \maxd\limits_{a \in A_k}  d\,(\taskst{g_k}{a})) \\
\end{array}
\]

\end{lemma}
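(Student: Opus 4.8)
The plan is to prove both identities simultaneously by induction on $k$, peeling off the outermost $\otimes\,g_k$ and applying the two-operand results \lemmaword~\ref{lem:times_measure_equations} and \lemmaword~\ref{lem:otimes_t_approximation}. For the base case $k=1$ the state is $s_0\otimes g_1$ with $A_1=\tra{s_0}$, and the two claimed formulae are literally $(\star)$ of \lemmaword~\ref{lem:times_measure_equations} and \lemmaword~\ref{lem:otimes_t_approximation}. For $k\ge 2$ write $s=s'\otimes g_k$ with $s'=((s_0\otimes g_1)\dots\otimes g_{k-1})$; note that the sets $A_1,\dots,A_{k-1}$ associated with the shorter chain $s'$ coincide with the ones in the statement, so the induction hypothesis applies to $s'$. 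A preliminary observation, immediate from the shape of \ruleno{ConjStopAns} and \ruleno{ConjStepAns} (each answer $e$ emitted by the left operand of an $\otimes$ spawns an independent subcomputation $\taskst{g}{e}$ whose answers are exactly those eventually produced through it), is that $\tra{s'}=A_k$; this is the same fact already used implicitly when indexing the sums in \lemmaword~\ref{lem:times_measure_equations}.

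For the $d$-identity, apply $(\star)$ to $s=s'\otimes g_k$: $d\,(s)=d\,(s')+\sum_{a\in\tra{s'}}d\,(\taskst{g_k}{a})$. Substituting $\tra{s'}=A_k$ and the induction hypothesis $d\,(s')=d\,(s_0)+\sum_{1\le i\le k-1}\sum_{a\in A_i}d\,(\taskst{g_i}{a})$ yields exactly the claimed expression for $d\,(s)$.

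For the $t$-identity, apply \lemmaword~\ref{lem:otimes_t_approximation} to $s=s'\otimes g_k$, giving $t\,(s)=t\,(s')+\sum_{a\in A_k}t\,(\taskst{g_k}{a})+\Theta\,(d\,(s')+\sum_{a\in A_k}d\,(\taskst{g_k}{a})-\maxd_{a\in A_k}d\,(\taskst{g_k}{a}))$. Plug in the induction hypothesis for $t\,(s')$ and add the two $\Theta$-terms via \lemmaword~\ref{lem:theta_sum}; the non-$\Theta$ parts combine into $t\,(s_0)+\sum_{1\le i\le k}\sum_{a\in A_i}t\,(\taskst{g_i}{a})$ as required. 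It remains to collapse the combined $\Theta$-argument
\[
\Big(d\,(s_0)+\!\!\sum\limits_{1\le i\le k-1}\sum\limits_{a\in A_i}d\,(\taskst{g_i}{a})-\maxd\limits_{a\in A_{k-1}}d\,(\taskst{g_{k-1}}{a})\Big)+\Big(d\,(s')+\sum\limits_{a\in A_k}d\,(\taskst{g_k}{a})-\maxd\limits_{a\in A_k}d\,(\taskst{g_k}{a})\Big)
\]
to $\Theta$ of the target $d\,(s_0)+\sum_{1\le i\le k}\sum_{a\in A_i}d\,(\taskst{g_i}{a})-\maxd_{a\in A_k}d\,(\taskst{g_k}{a})$. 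Using the already-proven $d$-identity for $s'$, the second bracket is exactly this target, and it is $\ge d\,(s_0)\ge 1>0$. The first bracket is likewise $\ge d\,(s_0)\ge 1>0$, and it does not exceed the target, since the difference (target minus first bracket) equals $\big(\sum_{a\in A_k}d\,(\taskst{g_k}{a})-\maxd_{a\in A_k}d\,(\taskst{g_k}{a})\big)+\maxd_{a\in A_{k-1}}d\,(\taskst{g_{k-1}}{a})\ge 0$, because all $d$-values are positive and $\maxd S\le\sum S$ for finite sets of positive numbers. Hence \lemmaword~\ref{lem:theta_absorb} applies with $g_1$ the target and $g_2$ the first bracket, collapsing the sum of the two brackets to $\Theta$ of the target and completing the induction.

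I expect the only real bookkeeping hazard to be this last $\Theta$-manipulation: one must verify that the ``extra'' max-exclusion term $\maxd_{a\in A_{k-1}}d\,(\taskst{g_{k-1}}{a})$ inherited from the inductive estimate of $t\,(s')$ is genuinely dominated by the target (so that it is absorbed rather than altering the asymptotics), and one must check the degenerate cases in which some $A_i$ is empty (then the corresponding $\maxd$ is $0$, and the inequalities still hold). Everything else is a direct substitution of the two-operand lemmas into the peeled-off chain.
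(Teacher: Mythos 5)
Your proof is correct and follows essentially the same route as the paper's: induction on $k$, the identification $\tra{s'}=A_k$, an application of Lemma~\ref{lem:times_measure_equations} and Lemma~\ref{lem:otimes_t_approximation} to the peeled-off state $s'\otimes g_k$, and absorption of the inherited $\Theta$-term into the new one via Lemma~\ref{lem:theta_absorb}. The dominance check you flag as the main hazard is exactly the step the paper also singles out, and your verification of it is sound.
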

\begin{proof}
First we show that for all $i$, $A_{i + 1} = \tra{((s_0 \otimes g_1) \dots \otimes g_i)}$ (it's a simple induction on $i$ and then on the trace).

Then we can prove the statement by induction on $k$.
We unfold $d\,(s \otimes g_{k+1})$ and $t\,(s \otimes g_{k+1})$ using equations in \lemmaword~\ref{lem:times_measure_equations}.
Then we rewrite $d\,(s)$ and $t\,(s)$ in these equations with inductive hypothesis.
For $d\,(s \otimes g_{k+1})$ we get exactly the equation we need.
For $t\,(s \otimes g_{k+1})$ we have a sum of the following parts (first two from the unfolding of $t\,(s)$, last two from the rest of the equation for $t\,(s \otimes g_{k+1})$):
\begin{enumerate}
\item $t\,(s_0) + \displaystyle{\sum\limits_{1 \le i \le k}} \displaystyle\sum\limits_{a \in A_i} t\,(\taskst{g_i}{a})$
\item $ \Theta\,(d\,(s_0) +  \displaystyle{\sum\limits_{1 \le i \le k}} \displaystyle\sum\limits_{a \in A_i} d\,(\taskst{g_i}{a}) - \maxd\limits_{a \in A_k}  d\,(\taskst{g_k}{a}))$
\item $\displaystyle\sum\limits_{a \in A_{k+1}} t\,(\taskst{g_{k+1}}{a})$
\item $ \Theta\,(d\,(s_0) +  \displaystyle{\sum\limits_{1 \le i \le k}} \displaystyle\sum\limits_{a \in A_i} d\,(\taskst{g_i}{a}) + \displaystyle\sum\limits_{a \in A_k} d\,(\taskst{g_{k+1}}{a}) - \maxd\limits_{a \in A_{k + 1}}  d\,(\taskst{g_k}{a}))$
\end{enumerate}

We can see that the second part is subsumed by the last part (by \lemmaword~\ref{lem:theta_absorb}). The rest gives exactly the equation we need.
\qed
\end{proof}

Here is the general definition of well-formedness of states from~\cite{CertifiedSemantics}.

\begin{definition}
  Well-formedness condition for extended states:

  \begin{itemize}
  \item $\diamond$ is well-formed;
  \item $\inbr{g, \sigma, n}$ is well-formed iff $\fv{g}\cup\Dom\,(\sigma)\cup\VRan\,(\sigma)\subseteq\{\alpha_1,\dots,\alpha_n\}$;
  \item $s_1\oplus s_2$ is well-formed iff $s_1$ and $s_2$ are well-formed;
  \item $s\otimes g$ is well-formed iff $s$ is well-formed and for all leaf triplets $\inbr{\_,\_,n}$ in $s$ it is true that $\fv{g}\subseteq\{\alpha_1,\dots,\alpha_n\}$.
  \end{itemize}

\end{definition}

We will need \lemmaword~\ref{lem:measures_changing_env} in the following generalized form.

\begin{lemma}
\label{lem:gen_measures_changing_env}
Let $\pi \colon \{ \alpha_1, \dots, \alpha_N \} \to \{ \alpha_1, \dots, \alpha_{N'} \}$ be an injective function on variables and let $R_{\pi}$ be the following inductively defined relation on states:
\[ \begin{array}{lcl}
\Diamond R_{\pi} \Diamond & & \\
\taskst{g}{\mkenv{\sigma}{n}} R_{\pi} \taskst{g'}{\mkenv{\sigma'}{n'}} & \textit{iff} & g \sigma \pi = g' \sigma' \\
(s_1 \oplus s_2) R_{\pi} (s'_1 \oplus s'_2) & \textit{iff} & s_1 R_{\pi} s'_1 \land s_2 R_{\pi} s'_2  \\
(s \otimes g) R_{\pi} (s' \otimes g') & \textit{iff} & s R_{\pi} s' \land \\
\multicolumn{3}{l}{\textit{\quad for all substates $\taskst{g_i}{\mkenv{\sigma_i}{n_i}}$ in $s$}}\\
\multicolumn{3}{l}{\textit{\quad and corresponding substates $\taskst{g'_i}{\mkenv{\sigma'_i}{n'_i}}$ in $s'$,}} \\
\multicolumn{3}{l}{\quad g \sigma_i \pi = g' \sigma'_i} \\
\end{array} \]

Then for any two well-formed states $s$ and $s'$, such that all counters occuring in $s$ are less or equal than some $n$ and all counters occuring in $s'$ are less or equal than some $n'$ and $s R_{\pi} s'$ for some injective function $\pi \colon \{ \alpha_1, \dots, \alpha_n \} \to \{ \alpha_1, \dots, \alpha_n' \}$,

\[ d\,(s) = d\,(s') \]
and

\[t\,(s) = t\,(s') \]

and there is a bijection $b$ between sets of answers $\tra{s}$ and $\tra{s'}$ such that for any answer $a = \mkenv{\sigma_r}{n_r} \in \tra{s}$ there is a corresponding answer $b(a) = \mkenv{\sigma'_r}{n'_r} \in \tra{s'}$, s.t. $\sigma_r = \sigma \delta$ for some $\sigma$ that is a subtitution in some leaf substate of $s$ and $\sigma'_r = \sigma' \delta'$ for $\sigma'$ that is the substitutution of the corresponding leaf substate of $s'$ and there is an injective function $\pi_r \colon \{ \alpha_1, \dots, \alpha_{n_r} \} \to \{ \alpha_1, \dots, \alpha_{n'_r} \}$ such that $\pi_r \succ \pi$ and $\pi \delta' = \delta \pi_r$.
\end{lemma}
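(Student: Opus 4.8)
The plan is to reduce the statement to a \emph{one-step simulation} property and then iterate it along the traces of $s$ and $s'$, which turn out to proceed in lockstep. Concretely, I would first prove: whenever $s \mathrel{R_\pi} s'$ for well-formed $s, s'$ and an injective $\pi$ as in the hypothesis, and $s \xrightarrow{l} s_1$, then there is a transition $s' \xrightarrow{l'} s_1'$ and an injective $\pi_1 \succ \pi$ with $s_1 \mathrel{R_{\pi_1}} s_1'$, such that $l$ and $l'$ are ``the same up to $\pi_1$'': either both equal $\circ$, or both are environments $\mkenv{\sigma_r}{n_r}$ and $\mkenv{\sigma_r'}{n_r'}$ obtained as $\sigma_r = \sigma_i\delta$ and $\sigma_r' = \sigma_i'\delta'$ from corresponding leaves $\taskst{g_i}{\mkenv{\sigma_i}{n_i}}$, $\taskst{g_i'}{\mkenv{\sigma_i'}{n_i'}}$ of $s, s'$, with $\pi\delta' = \delta\pi_r$ for a suitable injective $\pi_r \succ \pi$ --- which is exactly the data the lemma demands of the answer bijection. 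Granting this, $d\,(s) = d\,(s')$ and $t\,(s) = t\,(s')$ follow directly from the definitions of $d$ and $t$, since $R_\pi$-related states have the same tree shape and hence the same leftmost-branch height $lh$, so each step decreases $d$ by $1$ and $t$ by the $lh$ of the current state on both sides, and the two traces have the same (finite, under the paper's standing assumptions) length; the bijection $b$ is assembled incrementally, inserting a pair of answers whenever a transition carries an environment label. Preservation of well-formedness along transitions~\cite{CertifiedSemantics} keeps the hypotheses of the one-step property available throughout.

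The one-step property I would establish by induction on the derivation of $s \xrightarrow{l} s_1$, i.e.\ on the structure of $s$. If $s = \taskst{g}{\mkenv{\sigma}{n}}$ is a leaf, case on the shape of $g$; note that $g\sigma\pi = g'\sigma'$ forces $g'$ to have the same top shape, since substitutions produce terms, not goals. For \ruleno{Disj} and \ruleno{Conj} the resulting $\oplus$- and $\otimes$-states are $R_\pi$-related straight from $g\sigma\pi = g'\sigma'$ (in the $\otimes$ case the side condition on the unique left leaf is precisely $g_2\sigma\pi = g_2'\sigma'$, which holds); \ruleno{Invoke} leaves the substitution untouched and replaces the goal by the relation body on both sides; \ruleno{Fresh} needs $\pi_1 = \pi \cup \{\alpha_{n+1}\mapsto\alpha_{n'+1}\}$; and \ruleno{UnifyFail}/\ruleno{UnifySuccess} rest on the commutation of an injective renaming with most general unifiers --- $t_1\sigma, t_2\sigma$ are unifiable iff $t_1'\sigma', t_2'\sigma'$ are, and in the successful case $\delta = mgu\,(t_1\sigma, t_2\sigma)$ and $\delta' = mgu\,(t_1'\sigma', t_2'\sigma')$ can be chosen with $\pi\delta' = \delta\pi_1$ for a suitable injective $\pi_1\succ\pi$, the renaming analogue of \lemmaword~\ref{lem:symbolic_unification_soundness}, provable from the universal property of the mgu. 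If $s = s_a \oplus s_b$, the applicable rule (\ruleno{DisjStop} or \ruleno{DisjStep}) is fixed by whether $s_a$ steps to $\diamond$; the inductive one-step property on $s_a \mathrel{R_\pi} s_a'$ makes the same rule fire on $s'$ and supplies $\pi_1$, and the resulting state ($s_b$, or the swapped disjunction) stays related, $s_b \mathrel{R_\pi} s_b'$ being unaffected by passing from $\pi$ to $\pi_1 \succ \pi$.

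The remaining case $s = s_a \otimes g$ is the one I expect to be the main obstacle. The applicable \ruleno{Conj$\dots$} rule is fixed by whether $s_a$ steps to $\diamond$ and whether its label is an environment; the inductive step on $s_a$ gives the matching transition and $\pi_1$, and the real content is the invariant recorded by the $\otimes$-clause of $R_\pi$ --- that the as-yet-unevaluated right conjunct $g$ is, through the \emph{current} renaming, related to \emph{every} leaf substitution of the left operand. This invariant must survive transitions internal to the left operand that introduce fresh variables or perform unifications (extending $\pi$ to $\pi_1$ and composing leaf substitutions with an mgu $\delta$), and, when an answer $a = \mkenv{\sigma_r}{n_r}$ is spawned by \ruleno{ConjStopAns} or \ruleno{ConjStepAns}, it must be transported faithfully onto the fresh copy $\taskst{g}{a}$: one checks $g\sigma_r\pi_r = g\sigma_i\delta\pi_r = g\sigma_i\pi\delta' = g'\sigma_i'\delta' = g'\sigma_r'$, combining the $\otimes$-side condition $g\sigma_i\pi = g'\sigma_i'$ with $\pi\delta' = \delta\pi_r$ from the label correspondence, and one re-establishes the side condition for the residual product state analogously. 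The delicate point is that the single $\pi_1$ returned for the whole $\otimes$-state must be simultaneously compatible with the recursive call on the left operand and with all such copies, so the commutation identities $\pi\delta' = \delta\pi_r$ have to be threaded consistently across these uses; stating the mgu-versus-renaming commutation precisely, in the style of \lemmaword~\ref{lem:symbolic_unification_soundness}, is the other place where care is needed. By contrast, once the simulation is in place the $d$/$t$ accounting and the construction of $b$ are routine.
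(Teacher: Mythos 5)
Your proposal is correct and follows essentially the same route as the paper: an induction along the (lockstep) traces with an inner induction on the derivation of a single semantic step, using the same key facts --- that $R_\pi$-related states have the same tree shape (hence equal $d$ and $t$), that $\pi$ is extended at \ruleno{Fresh}, that injective renamings commute with most general unifiers in the sense $\pi\,\delta' = \delta\,\pi_r$, and the computation $g\sigma_i\delta\pi_r = g\sigma_i\pi\delta' = g'\sigma'_i\delta'$ for re-establishing the $\otimes$-invariant at \ruleno{ConjStopAns}/\ruleno{ConjStepAns}. The only difference is presentational (you factor out an explicit one-step simulation lemma, while the paper interleaves the two inductions), and your account of the delicate points in the $\otimes$-case is, if anything, more explicit than the paper's.
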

\begin{proof}
We prove it by induction on the length of the trace for $s$; simultaneously we prove that the next states in the traces for $s$ and $s'$ also satisfy the relation $R_{\pi_r}$ for some $\pi_r$ (s.t. $\pi_r \succ \pi$). The equalities $d\,(s) = d\,(s')$ and $t\,(s) = t\,(s')$ are obvious in this induction, because states in the relation $R_{\pi}$ always have the same form (and therefore the same left height). To prove the fact about the bijection between answers and the fact that the relation holds for the next states we conduct an internal induction on the relation of operational semantics step. When we move through the introduction of the fresh variable we extend the injective function changing the variable by a binding between new fresh variables. In the base case of unification, when we extend the substitutions by the most general unifiers, we have the fact about the bijection between the sets of answers (singleton in this case) for the same injective renaming function by definition of the unification algorithm (we may change the names of variables before the unification and the result will be the same as if we do it after the unification):

\[ \pi \, mgu\,(t_1 \sigma \pi, t_2 \sigma \pi) = mgu\,(t_1 \sigma, t_2 \sigma) \, \pi  \]

For the case when we incorporate the answer obtained at this step in the next state (in rules $\ruleno{ConjStopAns}$ and $\ruleno{ConjStepAns}$) we use the statement about the bijection between the sets of answers from the inductive hypothesis to prove that the next states satisfy the relation:

\[ g \sigma \delta \pi_r = g \sigma \pi \delta' = g' \sigma' \delta'  \] 

All other cases naturally follow from inductive hypotheses.\qed

\end{proof}

\repeatlemma{lem:measures_changing_env}
\begin{proof}
It is a special case of \lemmaword~\ref{lem:gen_measures_changing_env}.\qed
\end{proof}

\begin{lemma}
\label{lem:update_substitutions_FV}

Let $\taskst{g_0}{\mkenv{\sigma_0}{n_0}}$ be a leaf state. Then for every answer $\mkenv{\sigma'}{n'}$ in $\tra{\taskst{g_0}{\mkenv{\sigma_0}{n_0}}}$, $\sigma' = \sigma_0 \delta$ for some substitution $\delta$, such that $\Dom\,(\delta) \cup \VRan\,(\delta) \subset FV\,(g_0 \sigma_0) \cup \{ \alpha_i \mid i > n_0 \}$.

\end{lemma}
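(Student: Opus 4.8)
The plan is to prove a strengthened invariant about every state occurring in $\trs{\taskst{g_0}{\mkenv{\sigma_0}{n_0}}}$ and then read the claim off it. Put $S = FV(g_0\sigma_0)\cup\{\alpha_i\mid i>n_0\}$ (so $FV(g_0\sigma_0)\subseteq S$ trivially), and claim that every state $s$ in the trace satisfies: (a) every leaf substate $\taskst{g}{\mkenv{\sigma}{n}}$ of $s$ has $n\ge n_0$, $\sigma=\sigma_0\delta$ for some $\delta$ with $\Dom(\delta)\cup\VRan(\delta)\subseteq S$, and $FV(g\sigma)\subseteq S$; and (b) for every node $s'\otimes g'$ occurring in $s$ and every substitution $\tau$ appearing in a leaf of $s'$, $FV(g'\tau)\subseteq S$. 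Throughout I will use freely that every state in the trace is well-formed (the initial leaf is, and well-formedness is preserved by transitions~\cite{CertifiedSemantics}), so in particular $\Dom(\sigma)\cup\VRan(\sigma)\subseteq\{\alpha_1,\dots,\alpha_n\}$ at every leaf.

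The base case is immediate (take $\delta=\varepsilon$). For the inductive step I case on the applied rule. The congruence rules \ruleno{DisjStop}, \ruleno{DisjStep}, \ruleno{ConjStop}, \ruleno{ConjStep} only rearrange or delete substates, so (a) and (b) pass through from the inductive hypothesis. \ruleno{Disj} and \ruleno{Fresh} keep $\sigma$ and only grow the counter; the free-variable bound for the new leaf goal follows since $FV(g_i\sigma)\subseteq FV((\disjgoal{g_1}{g_2})\sigma)$ in the first case, and since the introduced $\alpha_{n+1}$ has index $>n_0$ (hence lies in $S$) in the second — here well-formedness guarantees $\alpha_{n+1}\notin\Dom(\sigma)\cup\VRan(\sigma)$, so $(\freshgoal{\ttt{x}}{h})\sigma$ and $h[\bigslant{\alpha_{n+1}}{\ttt{x}}]\sigma$ differ only by $\alpha_{n+1}$. \ruleno{Invoke} also keeps $\sigma$; since a relation body is closed over its formal parameters, the instantiated body has free variables contained in $\bigcup_i FV(t_i)$, hence after applying $\sigma$ in $\bigcup_i FV(t_i\sigma)\subseteq FV(\invokegoal{R^k}{t_1}{t_k}\,\sigma)\subseteq S$. \ruleno{UnifySuccess} turns the leaf into $\diamond$ and emits $\mkenv{\sigma'}{n}$ with $\sigma' = \sigma_0\delta\mu$, $\mu=mgu(t_1\sigma,t_2\sigma)$; the crucial fact is that a most general unifier binds only variables of the unified terms, to subterms thereof, so $\Dom(\mu)\cup\VRan(\mu)\subseteq FV(t_1\sigma)\cup FV(t_2\sigma)=FV(g\sigma)\subseteq S$, whence $\Dom(\delta\mu)\cup\VRan(\delta\mu)\subseteq S$. \ruleno{Conj} replaces a leaf $\taskst{\conjgoal{g_1}{g_2}}{\mkenv{\sigma}{n}}$ by $\taskst{g_1}{\mkenv{\sigma}{n}}\otimes g_2$: the new leaf is handled as in \ruleno{Disj}, and (b) for the new $\otimes$-node holds because $FV(g_2\sigma)\subseteq FV((\conjgoal{g_1}{g_2})\sigma)\subseteq S$. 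Finally \ruleno{ConjStopAns}/\ruleno{ConjStepAns}: the left part $s'$ of some $s'\otimes g'$ emits an environment $\mkenv{\tau'}{m}$ from a leaf whose substitution $\tau$ satisfied (a) and (b); then $\tau'=\tau\mu$ with $\Dom(\mu)\cup\VRan(\mu)\subseteq S$ as above, so $\tau' = \sigma_0\delta'$ with support in $S$ and $m\ge n_0$, and $FV(g'\tau') = FV((g'\tau)\mu)\subseteq S$ because $FV(g'\tau)\subseteq S$ and $\mu$ is supported in $S$; thus the spawned leaf $\taskst{g'}{\mkenv{\tau'}{m}}$ satisfies (a), and the same computation shows (b) is preserved as $s'$ evolves further.

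It remains to extract the statement. Any element of $\tra{\taskst{g_0}{\mkenv{\sigma_0}{n_0}}}$ is a non-$\circ$ transition label in the trace. By inspection of the rules, only \ruleno{UnifySuccess} emits a non-$\circ$ label at a leaf, and such a label is propagated upwards unchanged only by \ruleno{DisjStop} and \ruleno{DisjStep}; the $\otimes$-rules that receive an answer from their left argument (\ruleno{ConjStopAns}, \ruleno{ConjStepAns}) re-label the step with $\circ$, so no answer crosses a $\otimes$-node. Hence every answer $\mkenv{\sigma'}{n'}$ in the trace is literally the environment emitted by some \ruleno{UnifySuccess} step at a leaf of a state satisfying the invariant, so by the \ruleno{UnifySuccess} case above $\sigma' = \sigma_0\delta$ with $\Dom(\delta)\cup\VRan(\delta)\subseteq FV(g_0\sigma_0)\cup\{\alpha_i\mid i>n_0\}$, which is exactly the claim.

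The main obstacle is the bookkeeping around clause (b): one must verify that the pending goal $g'$ of an $\otimes$-node keeps its free variables inside $S$ even as the substitutions in the leaves of its left part get repeatedly refined by new most general unifiers — this works precisely because refining by a unifier $\mu$ supported in $S$ sends any $h$ with $FV(h\rho)\subseteq S$ to one with $FV(h\rho\mu)\subseteq S$ — and that this is exactly the fact needed in \ruleno{ConjStopAns}/\ruleno{ConjStepAns} to certify the newly spawned leaf. The only other delicate points are the substitution-lemma manipulations in \ruleno{Fresh} and \ruleno{Invoke}, where one commutes the runtime substitution $\sigma$ past the syntactic substitutions $[\bigslant{\alpha_{n+1}}{\ttt{x}}]$ and $[\bigslant{t_1}{\ttt{x}_1}\dots\bigslant{t_k}{\ttt{x}_k}]$; well-formedness (freshness of $\alpha_{n+1}$ w.r.t. $\sigma$) is what makes this legitimate.
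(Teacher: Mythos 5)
Your proof is correct and follows essentially the same route as the paper's: both strengthen the statement to an invariant over every state in the trace (tracking, for each leaf, that its substitution refines $\sigma_0$ with support in $FV(g_0\sigma_0)\cup\{\alpha_i\mid i>n_0\}$, and, for each $\otimes$-node, that its pending goal's free variables under the left part's leaf substitutions stay in that set), proved by induction on the transition relation using the support of the mgu, closedness of relation bodies, and well-formedness for \ruleno{Fresh}. The only difference is presentational: the paper phrases the invariant via a state-indexed set $\Delta(s)$ shown to shrink along the trace and then chains the inclusions, whereas you fix the initial set $S$ once and prove containment directly.
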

\begin{proof}

First, we need some notions to generalize the statement.

Let $ENV\,(s)$ be the set of environments that occur in the given state.

\[ \begin{array}{lcl}
ENV\,(\Diamond) & = & \emptyset \\
ENV\,(\taskst{g}{e}) & = & \{ e \} \\
ENV\,(s_1 \oplus s_2) & = & ENV\,(s_1) \cup ENV\,(s_2) \\
ENV\,(s \otimes g) & = & ENV\,(s) 
\end{array} \]

Now, let's generalize the set of variables updated by answers from the statement to an arbitrary state.

\[ \begin{array}{lcl}
\Delta\,(\Diamond) & = & \emptyset \\
\Delta\,(\taskst{g}{\mkenv{\sigma}{n}}) & = & FV\,(g \sigma) \cup \{ \alpha_i \mid i > n \} \\
\Delta\,(s_1 \oplus s_2) & = & \Delta\,(s_1) \cup \Delta\,(s_2) \\
\Delta\,(s \otimes g) & = & \Delta\,(s_1) \cup \bigcup\limits_{\mkenv{\sigma}{n} \in ENV\,(s)} FV\,(g \sigma)
\end{array} \]

Now we can generalize the statement but for one semantical step only: if $s \xrightarrow{l} s'$, then the following three conditions hold:

\begin{enumerate}
\item $\Delta\,(s) \supset \Delta\,(s')$
\item If $l = \mkenv{\sigma'}{n'}$ then there exists $\mkenv{\sigma}{n} \in ENV\,(s)$ and substitution $\delta$, such that $\sigma' = \sigma \delta$ and $n' = n$ and $\Dom\,(\delta) \cup \VRan\,(\delta) \subset \Delta\,(s)$
\item For any $\mkenv{\sigma'}{n'} \in ENV\,(s')$ there exists $\mkenv{\sigma}{n} \in ENV\,(s)$ and substitution $\delta$, such that $\sigma' = \sigma \delta$ and $n' \ge n$ and $\Dom\,(\delta) \cup \VRan\,(\delta) \subset \Delta\,(s)$
\end{enumerate}

We prove it by the induction on semantical step relation (we have to prove all three conditions simultaneously).

\begin{enumerate}
\item The first condition is simple for the steps from leaf goals: the counters of occupied variables can only increase and the sets of free variables of subgoals can only decrease (except for the case of fresh variable introduction, where a new free variable appears, but it is greater than the counter); in case of invocation we use the fact that the body of any relation is closed (there are no free variables except for the arguments). For the $\ruleno{ConjStopAns}$ rule we use the second condition: the next step has the environment that updates one of the environments in $s$, so it does not introduce new variables in $\Delta$ and the counter also may only increase. For the $\ruleno{ConjStep}$ rule we use the third condition: all substitutions from environments of updated state after application to a goal do not introduce new variables in $\Delta$, the rest is handled by the inductive hypothesis. For the $\ruleno{ConjStepAns}$ rule we combine two previous arguments and for other cases the first condition is obvious from the inductive hypothesis.

\item The second condition needs to be proven only for the $\ruleno{UnifySuccess}$ rule (where it follows from the properties of the unification algorithm) and for the rules $\ruleno{DisjStop}$ and $\ruleno{DisjStep}$ (where it is obvious from the inductive hypothesis).

\item The third condition follows simply in all cases from the inductive hypothesis and from the second condition (for the rules $\ruleno{ConjStep}$ and $\ruleno{ConjStepAns}$ where the answer is incorporated in the next state).
\end{enumerate}

Now, the statement of the lemma follows from the generalized statement for one step: at each step substitutions in the answers and in the next step are composed with some additional substitutions that manipulate with only variables from the set $\Delta$ for this step, which is a subset of the set $\Delta$ in the beginning, which is exactly what we need.
\qed

\end{proof}

\repeatlemma{lem:symbolic_unification_soundness}
\begin{proof} $ $

From the corectness of the Robinson's unification algoithm we know that a substitution unifies the pair of terms $(t_1, t_2)$ iff it unifies all pairs of terms from the set $\{ (x, \delta\,(x)) \mid x \in \Dom\,(\delta) \}$ (because we obtain $\delta$ from the pair $(t_1, t_2)$ by Robinson's algorithm that maintains equivalent unification problem).

First, let's notice that similarly a substitution unifies the pair of terms $(t_1 \rho, t_2 \rho)$ iff it unifies all pairs of terms  $T = \{ (\rho\,(x), \delta\,(x) \rho) \mid x \in \Dom\,(\delta) \}$: $\nu$ is such substitution iff $\rho \nu$ unifies the terms $(t_1, t_2)$. And then also any most general unifier for $(t_1 \rho, t_2 \rho)$ is the most general unifier for $T$ and vice versa (by definition).

So now we need to show that $T$ is unifiable iff the unique $\rho'$ from the statement of the lemma exists (and that the most general unifier for $T$ can be defined with $\rho'$ and $\delta$). In both directions we will use the induction on the construction of the set of variables $U$, so lets consider the following sequence $U_i$: $U_0 = V$ and $U_{i+1} = \{ U_i \cup \bigcup\limits_{x \in U_{i}} FV\,(\delta\,(x)) \}$ (so $U$ is $U_l$ such that $U_l = U_{l + 1}$).

\begin{enumerate}
\item Suppose there is a substitution $\tau$ that unifies all the terms in $T$. Let's show that there is a unique $\rho'$ such that $\rho' \succ \rho$ and $\forall (y, \, t) \in \constr{\delta}{U}, \\ \rho'(y) = t \rho'$.

We know that $\rho\,(x) \tau = \delta\,(x) \rho \tau$ for all $x \in \Dom\,(\delta)$. We need the same condition for $\rho'$ for all $x \in U \cap \Dom\,(\delta)$. We can now show by induction on $i$ that for all variables $x \in U_i$ ($i \ge 1$) the value $\tau\,(x)$ is ground and uniquely defined for a given $\rho$, so they can be taken as values of $\rho'$ on variables from $U \setminus V$ and they are the only possible values. First, look at a pair $(\rho\,(x), \delta\,(x) \rho)$ in $T$ for some $x \in V$. We know that $\rho\,(x) \tau = \delta\,(x) \rho \tau$ and the term on the lhs is ground and uniquely defined by $\rho$. So the values of $\tau$ for all free variables of $\delta\,(x) \rho$ are ground and uniquely defined by $\rho$, too. If we do it for all such pairs in $T$ we will get the statement for $U_1$, then we can repeat this reasoning by induction for all $U_i$.

\item Now suppose there is $\rho'$ such that $\rho' \succ \rho$ and $\forall (y, \, t) \in \constr{\delta}{U}, \rho'(y) = t \rho'$. Let's construct the most general unifier for $T$ using the Robinson's algorithm.

Let's split $T$ on $T_U = \{ (\rho\,(x), \delta\,(x) \rho) \mid x \in U \cap \Dom\,(\delta) \}$ and $T_{-U} = \{ (\rho\,(x), \delta\,(x) \rho) \mid x \in  \Dom\,(\delta) \setminus U \}$. We will be applying rules from the Robinson's algorithm to $T$ for pairs of terms from $T_U$.

First, let's look at some pair $(\rho\,(x), \delta\,(x) \rho)$ for $x \in V \cap \Dom\,(\delta)$. By definition of $\rho'$ we have $\rho\,(x) \rho' = \delta\,(x) \rho \rho'$. The first term in the pair is ground, the second one may contain free variables (then they are variables from $U_1$). If the second term is ground, too, they are equal and we can delete this pair. Otherwise, using decomposition rule we decompose this pair to pairs of terms with second term being variable. After this, we will have pairs $(\rho'(y), y)$ for all $y \in FV\,(\delta\,(x) \rho)$. After that we do it for all such pairs for all variables from $V \cap \Dom\,(\delta)$, this pairs will turn into swapped bindings $(\rho'(y), y)$ for all $y \in U_1$ (maybe with repetitions). We then can discard the duplicates, swap the elements and apply this bindings in the rest of $T$. Now all the pairs $(\rho\,(x), \delta\,(x) \rho)$ for $x \in (U_1 \setminus V) \cap \Dom\,(\delta)$ after substitution have ground terms as the left term and we can repeat the transformation for all these pairs. We can repeat this process by induction on $i$ for $x \in U_i$, until $U_i$ becomes equal to $U$.

After this application of rules all pairs from $T_U$ are decomposed and turned into the substitution (as a set of bindings) $\rho'\restriction_{U \setminus V}$. On the other hand the pairs from $T_{-U}$  are not decomposed, just applied substitutions to them so every pair from this part still has some variable $z$ as the first term (because $z\not\in U$) and term $\delta\,(z) \rho (\rho'\restriction_{U \setminus V})$ as the second term. So we can see that we turned the set of terms $T$ into the substitution ${(\delta\restriction_{\Dom\,(\delta) \setminus U} \rho')\restriction_{\Dom\,(\delta) \setminus V}}$ which equals ${(\delta \rho')\restriction_{\Dom\,(\delta) \setminus V}}$ because $\rho'$ unifies bindings in $\delta$, so this substitution is the most general unifier for $T$ and therefore for terms $t_1 \rho$ and $t_2 \rho$. Then this substitution is alpha-equivalent to $mgu\,(t_1 \rho, t_2 \rho)$ (because most general unifiers are unique up to alpha-equivalence). So if we take $\rho mgu\,(t_1 \rho, t_2 \rho)$, we will get substitution alpha-equivalent to the substitution ${\rho ((\delta \rho')\restriction_{\Dom\,(\delta) \setminus V})}$ which equals to the substitution $\delta \rho'$ (obviously separately for variables from $V$ andfrom outside $V$).
\qed

\end{enumerate}
\end{proof}

\repeattheorem{extracted_approximations}
\begin{proof}
$ $\newline
\begin{enumerate}
\item Suppose we have proven this statement for $g \in C_{nf}$. Let's show it holds for $g \in D_{nf}$.
	\begin{enumerate}
	\item First, we prove it for  $g \in F_{nf}$ by induction on the goal.
	After unfolding each fresh constructor we get a goal with the same scheme.	
	Also going through each fresh constuct increases the values of $d\,(\cdot)$ and $t\,(\cdot)$ by $1$ by \lemmaword~\ref{lem:task_measure_equations}, this additional constant can be deleted by \lemmaword~\ref{lem:theta_constant}.
	
	\item Now we prove it for $g \in D_{nf}$ by induction on the goal. Let $g = \disjgoal{g_1}{g_2}$. Let $\schemewithvset{\mathfrak{S}_1}{V}$ be $\schemewithvset{\mathfrak{S}_2}{V}$ be children of the root in $\schemewithvset{\mathfrak{S}}{V}$.
		
	By \lemmaword~\ref{lem:task_measure_equations} and \lemmaword~\ref{lem:sum_measure_equations} and \lemmaword~\ref{lem:measures_changing_env} we have the following equations:

    \[ \begin{array}{lcl}
	d\,(\taskst{\disjgoal{g_1}{g_2}}{e_{init}}) &=& d\,(\taskst{g_1}{e_{init}}) + d\,(\taskst{g_2}{e_{init}}) + 1 \\
	\\
	t\,(\taskst{\disjgoal{g_1}{g_2}}{e_{init}}) &=& t\,(\taskst{g_1}{e_{init}}) + t\,(\taskst{g_2}{e_{init}}) \\
	&& + \min\,(2 d\,(\taskst{g_1}{e_{init}}) - 1, 2 d\,(\taskst{g_2}{e_{init}})) + 1
	\end{array} \]
	
	After rewriting the right part with inductive hypotheses (combining them using \lemmaword~\ref{lem:theta_constant} and \lemmaword~\ref{lem:theta_sum}) we get the following approximations.
	
	 \[ \begin{array}{lcl}
	d\,(\taskst{\disjgoal{g_1 \rho}{g_2 \rho}}{e_{init}}) &=& \mathcal{D}(\schemewithvset{\mathfrak{S}_1}{V})(\rho) + \mathcal{D}(\schemewithvset{\mathfrak{S}_2}{V})(\rho) + \Theta\,(1) \\
	\\
	t\,(\taskst{\disjgoal{g_1 \rho}{g_2 \rho}}{e_{init}}) &=& \mathcal{T}(\schemewithvset{\mathfrak{S}_1}{V})(\rho) + \mathcal{T}(\schemewithvset{\mathfrak{S}_2}{V})(\rho) + \\
	& & \Theta\,(\min\,(\mathcal{D}(\schemewithvset{\mathfrak{S}_1}{V})(\rho), \mathcal{D}(\schemewithvset{\mathfrak{S}_2}{V})(\rho)) + \\
	& & + (\mathcal{D}(\schemewithvset{\mathfrak{S}_1}{V})(\rho) - \smashoperator{\maxd\limits_{\taskst{g_i}{e_i} \in \mathcal{L}(\schemewithvset{\mathfrak{S}_1}{V})(\rho)}} d\,(\taskst{g_i}{e_i})) + \\
	& & + (\mathcal{D}(\schemewithvset{\mathfrak{S}_2}{V})(\rho) - \smashoperator{\maxd\limits_{\taskst{g_i}{e_i} \in \mathcal{L}(\schemewithvset{\mathfrak{S}_2}{V})(\rho)}} d\,(\taskst{g_i}{e_i})) + 1) \\
	\end{array} \]
	
	For $d\,(\cdot)$ it is exactly what we need. 
   
   	W.l.o.g. let's suppose the minimum in the approximation for $t\,(\cdot)$ is achieved at the first argument.
   
   	Let's consider two cases: which of $\smashoperator{\maxd\limits_{\taskst{g_i}{e_i}  \in \mathcal{L}(\schemewithvset{\mathfrak{S}_l}{V})(\rho)}}d\,(\taskst{g_i}{e_i})$ is the maximal leaf for the whole scheme.
   
   		\begin{enumerate}
   		\item Suppose $\smashoperator{\maxd\limits_{\taskst{g_i}{e_i} \in \mathcal{L}(\schemewithvset{\mathfrak{S}_1}{V})(\rho)}} d\,(\taskst{g_i}{e_i}) \le \smashoperator{\maxd\limits_{\taskst{g_i}{e_i} \in \mathcal{L}(\schemewithvset{\mathfrak{S}_2}{V})(\rho)}} d\,(\taskst{g_i}{e_i})$.
   		
   		Then we can absorb the summand $(\mathcal{D}(\schemewithvset{\mathfrak{S}_1}{V})(\rho) - \smashoperator{\maxd\limits_{\taskst{g_i}{e_i} \in \mathcal{L}(\schemewithvset{\mathfrak{S}_1}{V})(\rho)}} d\,(\taskst{g_i}{e_i}))$ under $\Theta$ by the larger summand $\mathcal{D}(\schemewithvset{\mathfrak{S}_1}{V})(\rho)$ (which came from $\min$) by \lemmaword~\ref{lem:theta_absorb}. We get the following approximation which is exactly what we need:
   		
   		\[ \begin{array}{lcl}
		t\,(\taskst{\disjgoal{g_1 \rho}{g_2 \rho}}{e_{init}}) &=& \mathcal{T}(\schemewithvset{\mathfrak{S}_1}{V})(\rho) + \mathcal{T}(\schemewithvset{\mathfrak{S}_2}{V})(\rho) + \\
		& & \Theta\,(\mathcal{D}(\schemewithvset{\mathfrak{S}_1}{V})(\rho) + \\
		& & + (\mathcal{D}(\schemewithvset{\mathfrak{S}_2}{V})(\rho) - \smashoperator{\maxd\limits_{\taskst{g_i}{e_i} \in \mathcal{L}(\schemewithvset{\mathfrak{S}_2}{V})(\rho)}} d\,(\taskst{g_i}{e_i})) + 1) \\
		\end{array} \]
		
		\item Suppose $\smashoperator{\maxd\limits_{\taskst{g_i}{e_i} \in \mathcal{L}(\schemewithvset{\mathfrak{S}_1}{V})(\rho)}} d\,(\taskst{g_i}{e_i}) > \smashoperator{\maxd\limits_{\taskst{g_i}{e_i} \in \mathcal{L}(\schemewithvset{\mathfrak{S}_2}{V})(\rho)}} d\,(\taskst{g_i}{e_i})$.
		
		Then we establish the lower and the upper bounds separately.
		
			\begin{enumerate}
   			\item For the lower bound we again first absorb the summand $(\mathcal{D}(\schemewithvset{\mathfrak{S}_1}{V})(\rho) - \smashoperator{\maxd\limits_{\taskst{g_i}{e_i} \in \mathcal{L}(\schemewithvset{\mathfrak{S}_1}{V})(\rho)}} d\,(\taskst{g_i}{e_i}))$ by \lemmaword~\ref{lem:theta_absorb} to get the following.

			\[ \begin{array}{lcl}
			t\,(\taskst{\disjgoal{g_1 \rho}{g_2 \rho}}{e_{init}}) &\ge& \mathcal{T}(\schemewithvset{\mathfrak{S}_1}{V})(\rho) + \mathcal{T}(\schemewithvset{\mathfrak{S}_2}{V})(\rho) + \\
			& & C_1 \cdot (\mathcal{D}(\schemewithvset{\mathfrak{S}_1}{V})(\rho) + \\
			& & + (\mathcal{D}(\schemewithvset{\mathfrak{S}_2}{V})(\rho) - \smashoperator{\maxd\limits_{\taskst{g_i}{e_i} \in \mathcal{L}(\schemewithvset{\mathfrak{S}_2}{V})(\rho)}} d\,(\taskst{g_i}{e_i})) + 1) \\
			\end{array} \]		
		
			Then replace $(-\smashoperator{\maxd\limits_{\taskst{g_i}{e_i} \in \mathcal{L}(\schemewithvset{\mathfrak{S}_2}{V})(\rho)}} d\,(\taskst{g_i}{e_i}))$ \\ by $(-\smashoperator{\maxd\limits_{\taskst{g_i}{e_i} \in \mathcal{L}(\schemewithvset{\mathfrak{S}_1}{V})(\rho)}} d\,(\taskst{g_i}{e_i}))$ which is smaller by assumption.
		
			\item For the upper bound we first replace $\mathcal{D}(\schemewithvset{\mathfrak{S}_1}{V})(\rho)$ that came form $\min$ by $\mathcal{D}(\schemewithvset{\mathfrak{S}_2}{V})(\rho)$ which is larger by the assumption and then absorb the summand $(\mathcal{D}(\schemewithvset{\mathfrak{S}_2}{V})(\rho) - \smashoperator{\maxd\limits_{\taskst{g_i}{e_i} \in \mathcal{L}(\schemewithvset{\mathfrak{S}_2}{V})(\rho)}} d\,(\taskst{g_i}{e_i}))$ by it.
			
			\end{enumerate}		
   		\end{enumerate}
	\end{enumerate}

\item Now let's prove the statement of the theorem for $g \in C_{nf}$.

Let $g = { \conjgoal{(\conjgoal{(\conjgoal{g_1}{g_2})}{\dots})}{g_k} }$ with $g_i \in B_{nf}$.

First, notice that the state $\taskst{g}{e_{init}}$ is transformed into the state \\ ${ ((\taskst{g_1}{\mkenv{\varepsilon}{n_{init}(g)}} \otimes g_2) \otimes \dots) \otimes g_k }$ after $(k - 1)$ steps of turning conjunctions into $\otimes$-states. All states during this steps except the resulting one add $(k - 1)$ to the value $d\,(\taskst{g}{e_{init}})$ and $\dfrac{(k - 1)(k - 2)}{2}$ to the value $t\,(\taskst{g}{e_{init}})$, we can hide this constants under $\Theta$ by \lemmaword~\ref{lem:theta_absorb}. \lemmaword~\ref{lem:times_gen_measure_approximations} gives us the approximations of the measures for the state ${ ((\taskst{g_1}{\mkenv{\varepsilon}{n_{init}(g)}} \otimes g_2) \otimes \dots) \otimes g_k }$. To put it in a convenient form we will use the following definitions.

\[ \begin{array}{lcl}
D\,(s, \epsilon) & = & d\,(s) \\
D\,(s, g : \Gamma) & = & d\,(s) + \smashoperator{\sum\limits_{a \in \tra{s}}} D\,(\taskst{g}{a}, \Gamma) \\
\\
T\,(s, \epsilon) & = & t\,(s) \\
T\,(s, g : \Gamma) & = & t\,(s) + \smashoperator{\sum\limits_{a \in \tra{s}}} T\,(\taskst{g}{a}, \Gamma) \\
\\
L\,(s, \epsilon) & = & \{ s \} \\
L\,(s, g : \Gamma) & = & \smashoperator{\bigcup\limits_{a \in \tra{s}}} T\,(\taskst{g}{a}, \Gamma) \\
\end{array}
\]

Now,  \lemmaword~\ref{lem:times_gen_measure_approximations} gives us the following approximations if we denote ${ g_2 \rho : \dots g_k \rho}$ by $\Gamma$.

\[ \begin{array}{lcl}
d\,(\taskst{g \rho}{e_{init}}) &=& D\,(\taskst{g_1 \rho}{\mkenv{\varepsilon}{n_{init}(g)}}, \Gamma) + \Theta\,(1) \\
\\
t\,(\taskst{g \rho}{e_{init}}) &=& T\,(\taskst{g_1 \rho}{\mkenv{\varepsilon}{n_{init}(g)}}, \Gamma) + \\ 
& & \Theta\,(D\,(\taskst{g_1 \rho}{\mkenv{\varepsilon}{n_{init}(g)}}, \Gamma) - \smashoperator{\maxd\limits_{s \in L\,(\taskst{g_1 \rho}{\mkenv{\varepsilon}{n_{init}(g)}}, \Gamma)}} d\,(s) + 1) \\
\end{array} \]

It's the approximation in the form required in the statement of the theorem. What remains to be proven are the following equalities:

\[ \begin{array}{lcl}
D\,(\taskst{g_1 \rho}{\mkenv{\varepsilon}{n_{init}(g)}}, \Gamma) &=& \mathcal{D}(\schemewithvset{\mathfrak{S}}{V})(\rho) \\
T\,(\taskst{g_1 \rho}{\mkenv{\varepsilon}{n_{init}(g)}}, \Gamma) &=& \mathcal{T}(\schemewithvset{\mathfrak{S}}{V})(\rho) \\
\{ d\,(s) \mid s \in L\,(\taskst{g_1 \rho}{\mkenv{\varepsilon}{n_{init}(g)}}, \Gamma)\} &=& \{ d\,(s) \mid s \in \mathcal{L}(\schemewithvset{\mathfrak{S}}{V})(\rho) \} \\
\end{array} \]
	
We can prove them by induction, but first we need to generalize the statement. Let $g$ be a goal from $B_{nf}$ and $\Gamma = { g_1 : \dots : g_m : \epsilon }$~--- a sequence of goals from $B_{nf}$, $\sigma$ be a substitution, $n$ be a fresh variable counter such that the state ${ (((\taskst{g}{\mkenv{\sigma}{n}} \otimes g_1) \otimes \dots) \otimes g_m) }$ is well-formed, and $V$ be a subset of variables $\{ \alpha_1, \dots, \alpha_n \}$. Then if \[ \schemetrans{g}{\Gamma}{\sigma}{n}{V}{\schemewithvset{\mathfrak{S}}{V}} \] then the following equalities hold for any $\rho \colon V \to \mathcal{T}_{\emptyset}$.

\[ \begin{array}{lcl}
D\,(\taskst{g}{\mkenv{\sigma \rho}{n}}, \Gamma) &=& \mathcal{D}(\schemewithvset{\mathfrak{S}}{V})(\rho) \\
T\,(\taskst{g}{\mkenv{\sigma \rho}{n}}, \Gamma) &=& \mathcal{T}(\schemewithvset{\mathfrak{S}}{V})(\rho) \\
\{ d\,(s) \mid s \in L\,(\taskst{g}{\mkenv{\sigma \rho}{n}}, \Gamma)\} &=& \{ d\,(s) \mid s \in \mathcal{L}(\schemewithvset{\mathfrak{S}}{V})(\rho) \} \\
\end{array} \]

(to get from this generalization to the equations above we should take $\sigma = \varepsilon$ and apply \lemmaword~\ref{lem:measures_changing_env} to move $\rho$ from environment to the goal).

We prove the generalized statement by the induction on $\Gamma$ and considering cases when $g$ is an equality or a relational call. The reasoning is exactly the same for all three notions $D\,(\cdot)$, $T\,(\cdot)$ and $L\,(\cdot)$, so we demonstrate only the proof of the equality between $D\,(\cdot)$ and $\mathcal{D}(\cdot)(\cdot)$.

	\begin{enumerate}

	\item Let $\Gamma = \epsilon$ and $g = (\unigoal{t_1}{t_2})$.
	
	In this case we have \[ d\,(\taskst{\unigoal{t_1}{t_2}}{\mkenv{\sigma \rho}{n}}) = d\,(\taskst{\unigoal{t_1 \sigma \rho}{t_2 \sigma \rho}}{e_{init}}) = 1 \]
	
	\item Let $\Gamma = \epsilon$ and $g = (\invokegoal{R^k}{t_1}{t_k})$.
	
	In this case we have \[ d\,(\taskst{\invokegoal{R^k}{t_1}{t_k}}{\mkenv{\sigma \rho}{n}}) = d\,(\taskst{\invokegoal{R^k}{t_1 \sigma \rho}{t_k \sigma \rho}}{e_{init}}) \]
	
	Which is true by \lemmaword~\ref{lem:measures_changing_env}.
	
	\item Let $\Gamma = g' : \Gamma'$ and $g = (\unigoal{t_1}{t_2})$.
	
	\begin{enumerate}
	
	    \item If the terms $t_1 \sigma$ and $t_2 \sigma$ are non-unifiable, we have \[ d\,(\taskst{\unigoal{t_1}{t_2}}{\mkenv{\sigma \rho}{n}}) + \smashoperator{\sum\limits_{mgu\,(t_1 \sigma \rho, t_2 \sigma \rho) = \delta'}} D\,(\taskst{g'}{\mkenv{\sigma \rho \delta'}{n}}, \Gamma') = 1 \]
	    
	    And it is obviously true because the sum is empty since the more specific terms there are non-unifiable also.
	
	    \item If they are unifiable, we have \[ d\,(\taskst{\unigoal{t_1}{t_2}}{\mkenv{\sigma \rho}{n}}) + \smashoperator{\sum\limits_{mgu\,(t_1 \sigma \rho, t_2 \sigma \rho) = \delta'}} D\,(\taskst{g'}{\mkenv{\sigma \rho \delta'}{n}}, \Gamma') = \]
	
	\[ = 1 +
      \smashoperator{\sum\limits_{\substack{ \rho' \colon U \to \grterms \\
                                      \rho' \succ \rho \\
                                      \forall (y, t) \in Cs, \rho'(y) = t \rho'  }}}
           \mathcal{D}(\schemewithvset{\mathfrak{S}}{U})(\rho')  \]
           
    		where
    
    \[ \begin{array}{lcl}
    \delta & = & mgu\,(t_1 \sigma, t_2 \sigma) \\
    U & = & \upd{V}{\delta} \\
    Cs & = & \constr{\delta}{U} \\
	\end{array} \]
	
			The left summands are obviously equal. The rest is basically covered by \lemmaword~\ref{lem:symbolic_unification_soundness}. By this lemma there exists a most general unifier $\delta'$ iff the required $\rho'$ exists. So both sums are non-empty under the same conditions and have at most one summand (since $\rho'$ is unique), and if it is the case these summands are equal by \lemmaword~\ref{lem:symbolic_unification_soundness}, the inductive hypothesis and the fact that the value $D$ is stable w.r.t. renaming of variables (it is a generalization of \lemmaword~\ref{lem:measures_changing_env} that follows simply from \lemmaword~\ref{lem:gen_measures_changing_env}):
	
\[ \begin{array}{l}
D\,(\taskst{g'}{\mkenv{\sigma \rho \delta'}{n}}, \Gamma') \stackrel{\text{\lemmaword~\ref{lem:symbolic_unification_soundness}}}{=} D\,(\taskst{g'}{\mkenv{\sigma \delta \rho' \tau}{n}}, \Gamma') = \\
= D\,(\taskst{g'}{\mkenv{\sigma \delta \rho'}{n}}, \Gamma') \stackrel{\text{ind.hyp.}}{=} \mathcal{D}(\schemewithvset{\mathfrak{S}}{U})(\rho')
\end{array} \] 

	\end{enumerate}
	
	\item Let $\Gamma = g' : \Gamma'$ and $g = (\invokegoal{R^k}{t_1}{t_k})$.
	
	In this case we have \[ d\,(\taskst{\invokegoal{R^k}{t_1}{t_k}}{\mkenv{\sigma\rho}{n}}) + \smashoperator{\sum\limits_{a \in \tra{\taskst{\invokegoal{R^k}{t_1}{t_k}}{\mkenv{\sigma\rho}{n}}}}} D\,(\taskst{g'}{a}, \Gamma') = \]
	
	\[ = d\,(\taskst{\invokegoal{R^k}{t_1 \sigma \rho}{t_k \sigma \rho}}{e_{init}}) +
      \smashoperator{\sum\limits_{\substack{ \rho' \colon U \to \grterms \\
                                      \rho' \succ \rho \\
                                      (t_1 \sigma \rho', \dots, t_k \sigma \rho') \in \sembr{R^k}  }}}
           \mathcal{D}(\schemewithvset{\mathfrak{S}}{U})(\rho')  \]
           
    where
    
    \[ \begin{array}{lcl}
    U & = &  V \cup \bigcup_{i} FV\,(t_i \sigma) \\
    \end{array} \]
    
    The left summands are equal by \lemmaword~\ref{lem:measures_changing_env}.
    
    Each $a \in \tra{\taskst{\invokegoal{R^k}{t_1}{t_k}}{\mkenv{\sigma\rho}{n}}}$ has the form $(\sigma \rho \delta, n')$ for some $\delta$ and some $n' > n$ by \lemmaword~\ref{lem:gen_measures_changing_env}. All free variables of terms $t_i$ are mapped into ground terms in each delta, because the relational call is grounding.
    
    There is a bijection between the set of answers under the sum on the lhs and a set of suitable $\rho'$ at the rhs: all ground values for the free variables that are consistent with the denotational semantics are present in some answer (because of completeness of the operational semantics w.r.t. to the denotational one) and in exactly one (because the call is answer-unique).
    
    Now, we need to show that if we take any answer $(\sigma \rho \delta, n')$ and $\rho'$ that corresponds to it by the described bijection then the values for them under the sums are equal. First, we need to replace $D\,(\taskst{g'}{(\sigma \rho \delta, n')}, \Gamma')$ by $D\,(\taskst{g'}{(\sigma \rho \delta\restriction_U, n)}, \Gamma')$, we can do it because all variables from $\Dom\,(\delta) \setminus U$ have indices greater or equal than $n$ (by \lemmaword~\ref{lem:update_substitutions_FV}) and therefore these variables are not free variables of goals from $(g' : \Gamma')$, so the values of $d$ on this goals will be the same for the restricted substitution (it follows simply from \lemmaword~\ref{lem:gen_measures_changing_env}). After this replacement we can directly apply the inductive hypothesis.
    \qed
	
	\end{enumerate}

\end{enumerate}
\end{proof}

\end{document}